\theoremstyle{plain}
\newtheorem{thm}{\protect\theoremname}
\theoremstyle{plain}
\newtheorem{lem}{\protect\lemmaname}
\theoremstyle{plain}
\newtheorem*{prop*}{\protect\propositionname}
\date{\today}
\newcommand{\beq}{\begin{equation}}
\newcommand{\eeq}{\end{equation}}
\newcommand{\beqa}{\begin{eqnarray}}
\newcommand{\eeqa}{\end{eqnarray}}
\def\om{\omega}
\providecommand{\lemmaname}{Lemma}
\providecommand{\propositionname}{Proposition}
\providecommand{\theoremname}{Theorem}
\begin{document}
\title{One-Dimensional Quantum Systems with Ground-State of Jastrow Form Are Integrable}

\author{Jing Yang\href{https://orcid.org/0000-0002-3588-0832} {\includegraphics[scale=0.05]{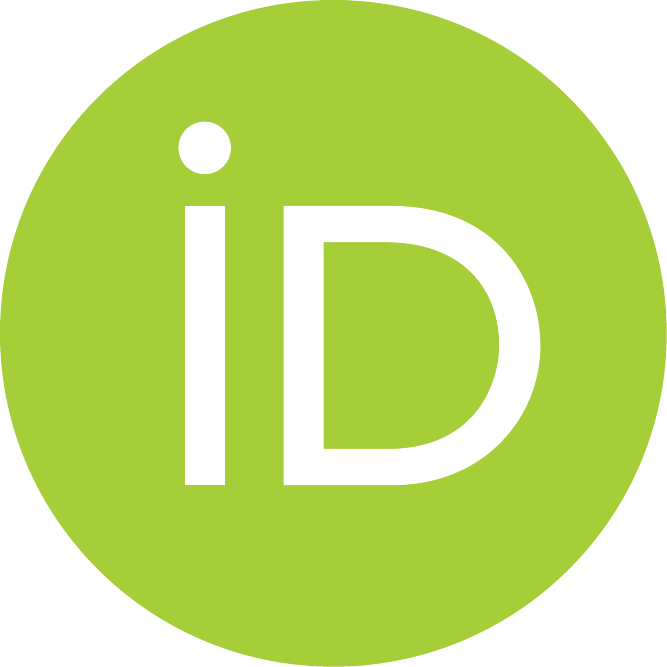}}}
\email{jing.yang@uni.lu}
\address{Department of Physics and Materials Science, University of Luxembourg,
L-1511 Luxembourg, Luxembourg}

\author{Adolfo del Campo\href{https://orcid.org/0000-0003-2219-2851}{\includegraphics[scale=0.05]{orcidid.pdf}}}
\email{adolfo.delcampo@uni.lu}
\address{Department of Physics and Materials Science, University of Luxembourg,
L-1511 Luxembourg, Luxembourg}
\address{Donostia International Physics Center, E-20018 San Sebasti\'an, Spain}

\begin{abstract}
The exchange operator formalism (EOF) 
describes many-body integrable systems using phase-space variables involving an exchange operator that acts on any pair of particles.
We establish an equivalence between models described by EOF and
the complete infinite family of parent Hamiltonians (PHJ) describing quantum many-body models with ground-states of Jastrow form. 
This makes it possible to identify the invariants of motion for any model in the PHJ family and establish its integrability, even in the presence of an external potential. 
Using this construction we establish the integrability of the long-range Lieb-Liniger model, describing bosons in a harmonic trap and subject to contact and Coulomb interactions in one dimension.  We give a variety of examples exemplifying the integrability of Hamiltonians in this family. 
\end{abstract}
\pacs{67.85.-d }
\keywords{integrable systems, ultracold gases, trapped gases in quantum fluids
and solids, strongly correlated systems}

\maketitle

Integrability in both classical and quantum many-body systems is associated with the existence of conserved quantities. 
At the quantum level, the latter correspond to operators that commute with the system Hamiltonian and govern the nonequilibrium dynamics and thermalization of a system in isolation \cite{Dziarmaga10,Mori18}.
Several integrable models have been realized in the laboratory,
prompting their use as a test-bed for quantum many-body physics, statistical
mechanics, and nonequilibrium phenomena \cite{Cazalilla11,GuanBatchelor13}. 

The integrability of a system may be proven by  finding the set of conserved
quantities. In one spatial dimension, this is possible in systems that are exactly solved using Bethe ansatz, which posits that the wavefunction of any  quantum eigenstate admits  an expansion in terms of plane waves with suitable coefficients and quasimomenta. The latter set the integrals of motion, are also known as the Bethe roots or rapidities, and serve as ``good'' quantum numbers \citep{Takahashi99,Gaudin14}.
An alternative framework is the exchange operator formalism (EOF) \citep{Polychronakos92,Polychronakos06}, in which the Hamiltonian of the quantum system admits a decoupled form in terms of generalized momenta, which readily allows for the identification of integrals of motion. {This approach can be applied to the study of excited states, as demonstrated in systems with inverse-square interactions~\cite{lapointe1996exact,ujino1996algebraic}.}
An encompassing notion of quantum integrability relies on scattering without diffraction, encoded in the
Yang-Baxter equation~\citep{yang1967someexact,baxter1978solvable,KBI97},
when collisions between particles can be described exclusively as
a sequence of two-body scattering events. 
The system is then solvable by algebraic Bethe ansatz, i.e., using the quantum inverse scattering method. 
Integrals of motion can be derived from the transfer matrix \cite{KBI97} or invoking the
asymptotic Bethe ansatz  \citep{Sutherland04,Sutherland95}. While a definite notion of quantum integrability remains under debate, many of
these approaches are closely interrelated~\citep{Caux11,Gaudin14}. In particular, EOF is related to the Yang-Baxter equation and asymptotic Bethe ansatz \citep{ujino1995correspondence,Polychronakos20}.

An important class of quantum systems is characterized by a ground-state of (Bijl-Dingle-) Jastrow form, in which
the wave function is simply the pairwise product of a pair function~\citep{Bijl40,Dingle49,Jastrow55}.
This facilitates the computation of correlation functions in these
systems \citep{Sutherland04}. The family of parent Hamiltonians with
Jastrow wave functions (PHJ, for short) can be determined by solving an inverse problem:
by acting with the kinetic energy operator in the ground-state wave
function, one can recast the resulting terms in the form of a many-body
Schr\"odinger equation, thus identifying the parent hamiltonian. 
This approach has its roots in the early works by Calogero
and Sutherland \citep{Calogero71,Sutherland71,Sutherland71pra}. It
has been extended in a number of ways \citep{CalogeroMarchioro75,Gambardella75}
and by now, for identical particles without internal degrees of freedom,
the complete family of PHJ is known both in one and
higher spatial dimensions, provided that the ground-state wave function
includes at most the product of one-particle and two-particle functions
\citep{delcampo20,BeauDC21}. The corresponding Hamiltonians generally
contain two-body and three-body interactions. It was shown by Kane
et al.~\citep{Kane91} that the three-body contribution does not affect
the low-energy physics. Further, the conditions for the three-body
term to vanish or reduce to a constant have been long-established
in the homogeneous case, in the absence of an external potential \citep{Calogero75,Sutherland04,SamajBajnok13}.

Paradigmatic instances of PHJ are integrable.
Hard-core bosons in the Tonks-Girardeau regime, realized in the laboratory
with ultracold gases~\citep{Kinoshita04,Paredes04}, have ground
state of Jastrow form \citep{Girardeau60,GWT01,GirardeauMinguzzi07}
and are integrable, being related to noninteracting fermions via the
Bose-Fermi duality \citep{Girardeau60,Girardeau04,Cazalilla11}. The
Calogero-Sutherland model with a Jastrow ground state has a harmonic
spectrum, it can be mapped to a set of independent harmonic oscillators
\citep{Kawakami93,Vacek94,Gurappa99}, and satisfies the asymptotic Bethe ansatz \citep{Sutherland95,Sutherland04}. 
Similarly, the attractive Lieb-Liniger (LL) model of bosons subject to contact interactions,  used to describe ultracold gases in tight waveguides \cite{Olshanii98,LSY03},  has a  bright quantum soliton 
as Jastrow ground-state   \citep{McGuire64}. This system  is solvable
by coordinate Bethe ansatz, which  yields the Bethe roots as integrals of motion \citep{LL63,L63,Takahashi99,Gaudin14}. 

One may thus wonder the extent to which the ground-state correlations can determine the complete integrability of the system, and what are the required conditions for this to be the case.
In this Letter, we show that the complete family of one-dimensional many-body quantum models with ground-state of Jastrow form is integrable.
To this end, we first establish the equivalence between this family and models described by EOF. In doing so, we identify explicitly the integrals of motion. Our construction holds in the presence of an external potential, which allows us to show the integrability of the long-range Lieb-Liniger model, describing bosons confined in a harmonic trap and subject to both contact and Coulomb interactions in one spatial dimension \citep{delcampo20,BeauPittman20}.

\textit{Systems described by  EOF---}
Consider the family of one-dimensional systems of identical particles without internal degrees of freedom.
It will prove useful to consider those models subject to pair-wise interactions that are possibly supplemented with three-body interactions. 
In this context, 
 EOF is a powerful framework due to Polychronakos that explicitly exhibits the integrability of a many-body quantum system in one spatial dimension \cite{Polychronakos92,Polychronakos06}. Its application has been particularly fruitful in  Calogero-Sutherland-Moser systems involving two-body inverse-square interactions  \cite{Calogero71,Sutherland71,Sutherland71b,Moser75,CalogeroMarchioro75,Sutherland04}, as discussed in \cite{Polychronakos92,Polychronakos06}. 

Let $M_{ij}$ denote the
exchange operator, which exchanges the positions
of two particles labelled by $i$ and $j$, respectively. This operator is Hermitian,
idempotent $M_{ij}^{2}=\mathbb{I}$ and symmetric with respect to
the indices, i.e., $M_{ij}=M_{ji}$. For any $1$-body operator $A_{j}\equiv A(x_{j})$,
it obeys the relations $M_{ij} A_{j}= A_{i}M_{ij}$
and $M_{ij}A_{k}=A_{k}M_{ij}$ for distinct $i$,
$j$, $k$~\citep{Polychronakos92, Polychronakos06, SM}. Note that when for spinless identical particles,
$M_{ij}$ can be identified with the permutation of two particles.
In terms of the canonical position and momentum coordinates, $x_{i}$
and $p_{j}=-\text{i}\hbar\partial/\partial x_{j}$, one can introduce the generalized
momenta 
\begin{equation}
\pi_{i}=p_{i}+\text{i}\sum_{j\neq i}V_{ij}M_{ij},\label{eq:Gmomentum}
\end{equation}
for particles $j=1,\cdots,N$. The so-called prepotential function
$V_{ij}=V(x_{i}-x_{j})$ should be antisymmetric (i.e., $V_{ij}=-V_{ji}$) 
to guarantee the Hermiticity of the generalized momenta. Using the
latter, one can construct a permutation-invariant quantities $I_{n}\equiv\sum_{i}\pi_{i}^{n}$.
In particular, $I_{2}$ is quadratic in $p_{i}$'s, and resembles
the Hamiltonian of many-body systems. To describe states of $N$ particles, consider the tensor product of the single-particle Hilbert space $\mathcal{H}$, 
i.e., $\mathcal{H}^{\otimes N}$. For indistinguishable
particles, states are restricted to the bosonic
or fermionic subspaces of $\mathcal{H}^{\otimes N}$, denoted as $\mathcal{H}_{\zeta}$
with $\zeta=+1$ for spinless bosons and $\zeta=-1$ for spinless
fermions. We define the projector $\mathcal{P}_{\zeta}$
 onto $\mathcal{H}_{\zeta}$ as~\citep{negele1998quantum}
$\mathcal{P}_{\zeta}\psi(x_{1},\,x_{2},\,\cdots,\,x_{N})=\frac{1}{N!}\sum_{\sigma}\zeta^{\sigma}\psi(x_{\sigma_{1}},\,x_{\sigma_{2}},\,\cdots,\,x_{\sigma_{N}})$,
where $\sigma$ denotes a permutation of the tuple $(1,\,2,\,\dots,\, N)$. Projecting $I_{2}$ onto the subspace $\mathcal{H}_{\zeta}$
and using 
\begin{equation}
M_{ij}\mathcal{P}_{\zeta}=\mathcal{P}_{\zeta}M_{ij}=\zeta\mathcal{P}_{\zeta},\label{eq:MP-PM-P}
\end{equation}
 we obtain $\mathcal{P}_{\zeta}I_{2}\mathcal{P}_{\zeta}/(2m)=\mathcal{P}_{\zeta}H_{0}\mathcal{P}_{\zeta}$,
where $H_{0}$ is the translation-invariant quantum many-body Hamiltonian
defined as follows
\begin{align}
H_{0} & =\sum_{i}\frac{p_{i}^{2}}{2m}+\frac{1}{m}\left[\sum_{i<j}(\zeta\hbar V'_{ij}+V_{ij}^{2})-\sum_{i<j<k}V_{ijk}\right],\label{eq:EOFH}
\end{align}
where $V_{ijk}=V_{ij}V_{jk}+V_{jk}V_{ki}+V_{ki}V_{ij}$ is fully symmetric
and a prime denotes the spatial derivative. The form of $H_{0}$ 
will play an important role in proving the integrability of the family
of Hamiltonians generated by EOF and PHJ. Specific choices of the
prepotential function $V(x)$ gives rise to well-known models. For
$V(x)=\lambda/x$, $V_{ijk}$ vanishes by permutation symmetry and
one recovers the Hamiltonian of identical particles with inverse-square
interactions \cite{Calogero71,Sutherland04}. For $V(x)=\lambda\cot(ax)$, $V_{ijk}$ is constant
and $H_{0}$ involves the inverse sine square potentials. The case
$V(x)=c\text{sgn}(x)$, corresponding $V_{ijk}$ being a negative
constant, gives rise to the celebrated Lieb-Liniger (LL) model \citep{LL63,L63}
describing ultracold gases in tight-waveguides \citep{Olshanii98,Cazalilla11}.
For all these cases where $V_{ijk}$ vanishes or is constant, $I_{n}$
commute with each other. As the system Hamiltonian coincides with
$I_{2}$ on the bosonic or fermionic sector, the set of $I_{n}$ can
be identified as invariants of motion, i.e., $[I_{n},I_{m}]=0$. We
note that all the models that have been shown to be integrable by
the EOF in Ref.~\citep{Polychronakos92} happen to have a ground-state
wave function of Jastrow form, which we discuss next.

\textit{Parent Hamiltonians with Jastrow ground-state.---} Consider
a homogenous one-dimensional many-body quantum system described by
a ground state of Jastrow form~\citep{Bijl40,Dingle49,Jastrow55},
\begin{equation}
\Phi_{0}(x_{1},\dots,x_{N})=\prod_{i<j}f_{ij},\label{eq:Phi0free}
\end{equation}
this is, the pairwise product of the pair function $f_{ij}=f(x_{i}-x_{j})$~\citep{Sutherland04}.
In the ``beautiful models''~\citep{Sutherland04} that concern us here, quantum statistics
is encoded in the symmetry of $f(x)$ which is an even function for
bosons and odd for fermions, i.e., without resorting to the use of
permanents or determinants. The case of one dimensional anyons can
similarly be taken into account by including a phase factor $\theta$, i.e.,
$f(x)= e^{-i\theta}f(-x)$ \citep{Kundu99,Girardeau06,Batchelor06}.
The complete family of PHJ of ground-state of the Jastrow form~(\ref{eq:Phi0free})
has been identified in one spatial dimension \citep{delcampo20}
and includes paradigmatic models such as the LL gas with contact interactions
\citep{LL63,L63} and the rational Calogero-Sutherland model with
inverse-square interactions \citep{Calogero71,Sutherland71}, as well
as the recently introduced long-range LL model \citep{BeauPittman20}.
For a given choice of $f$, the parent Hamiltonian $H_{0}$ takes
the form
\begin{equation}
H_{0}=\sum_{i}\frac{p_{i}^{2}}{2m}+\frac{\hbar^{2}}{m}\left[\sum_{i<j}\frac{f_{ij}''}{f_{ij}}+\sum_{i<j<k}\left(\frac{f_{ij}'f_{ik}'}{f_{ij}f_{ik}}-\frac{f_{ij}'f_{jk}'}{f_{ij}f_{jk}}+\frac{f_{ik}'f_{jk}'}{f_{ik}f_{jk}}\right)\right].\label{eq:PHJH}
\end{equation}
Here, $f'$ and $f''$ denote the first and second spatial derivatives
of $f$, respectively. The explicit expressions for this Hamiltonian
directly follow from evaluating the Laplacian on the Jastrow wave
function~(\ref{eq:Phi0free}) and recasting all resulting terms in
the form of a Schr\"odinger equation.

\textit{Equivalence of EOF and PHJ for spinless indistinguishable
particles.---} We now establish the correspondence between EOF and
PHJ for spinless identical particles. Comparing the
EOF Hamiltonian and the PHJ  in Eqs.~(\ref{eq:EOFH}) and (\ref{eq:PHJH}),
 the two-body terms are equal if
$\hbar^{2}f''(x_{ij})/f(x_{ij})=\zeta\hbar V'_{ij}+V_{ij}^{2}.$ Thus,
the prepotential reads
\begin{equation}
V_{ij}=\zeta\hbar\frac{d}{dx_{ij}}\log(f_{ij})=\zeta\hbar\frac{f_{ij}'}{f_{ij}}.
\label{eq:preV}
\end{equation}
Independently of whether the pair function is symmetric or antisymmetric,
its logarithmic derivative is guaranteed to be odd $f_{ij}'/f_{ij}=-f_{ji}'/f_{ji}$.
Thus, this property holds for spinless bosons and fermions. 
The antisymmetry of the prepotential in Eq.~(\ref{eq:preV}) guarantees the Hermiticity
condition of the associated generalized momenta in the EOF, 
\begin{equation}
\pi_{i}=p_{i}+\text{i}\zeta\hbar\sum_{j\neq i}\frac{f_{ij}'}{f_{ij}}M_{ij}\label{eq:Gpi-f}.
\end{equation}
The prepotential in Eq.~(\ref{eq:preV}) further ensures the equivalence of the three-body interaction in the
EOF and the PHJ. Thus,
any spinless system described by EOF, as in Eq.~(\ref{eq:EOFH}), has a ground-state
of Jastrow form with a pair function $f_{ij}=\exp[\int^{x_{ij}}dyV(y)/(\zeta \hbar)]$.
Conversely, the complete infinite family of PHJ can be recast in the
EOF provided~(\ref{eq:preV}) is satisfied. This makes it possible to
identify the class of PHJ that is integrable as we shall see later. 

\textit{Embedding in an external potential}.\textit{---} 
The embedding of a system described by EOF in an external potential is known in
the case of a harmonic trap~\citep{Polychronakos92}. For the embedding of a homogenous system in an arbitrary
trapping potential, we draw inspiration from
supersymmetric quantum mechanics~\citep{Cooper95} and introduce
the one-body superpotential $W_{i} \equiv W(x_{i})$ in terms of which 
 the external trapping potential $U_{i}$ will be identified. We 
define the operators
\begin{equation}
a_{i}=\frac{\pi_{i}}{\sqrt{2m}}-\text{i}W_{i},\quad a_{i}^{\dagger}=\frac{\pi_{i}}{\sqrt{2m}}+\text{i}W_{i},
\end{equation}
 and the permutation-invariant quantities $\tilde{I}_{n}\equiv\sum_{i}h_{i}^{n}$,
where $h_{i}\equiv a_{i}^{\dagger}a_{i}$. Projecting $\tilde{I}_{1}$
onto $\mathcal{H}_{\zeta}$, we find $\mathcal{P}_{\zeta}\tilde{I}_{1}\mathcal{P}_{\zeta}=\mathcal{P}_{\zeta}H\mathcal{P}_{\zeta}$,
where $H$ is the Hamiltonian of the system in the presence of the
trap, i.e.,  
\begin{equation}
H=H_{0}+\sum_{i}U_{i}-\zeta\sqrt{\frac{2}{m}}\sum_{i<j}V_{ij}(W_{i}-W_{j}),\label{eq:EOFH-embed}
\end{equation}
 with the external potential $U_{i}$ being determined by the Riccati equation
\begin{equation}
U_{i}=W_{i}^{2}-\frac{\hbar}{\sqrt{2m}}W_{i}'.\label{eq:U-EOF}
\end{equation}
As a familiar example, when $H_{0}$ is the homogeneous Calogero model with inverse-square interactions \cite{Calogero71} and $W_{i}=\sqrt{m/2}\omega x_{i}$,
Eq.~(\ref{eq:EOFH-embed}) reduces to the rational Calogero-Sutherland model \cite{Sutherland71,Sutherland71b} including  a harmonic trap. 

In PHJ, the ground-state wave functions is not limited to the homogeneous
form (\ref{eq:Phi0free}), but also includes more general ground-states
\begin{equation}
\Psi_{0}=\prod_{i<j}f_{ij}\prod_{i}\exp(v_{i})=\Phi_{0}\prod_{i}\exp(v_{i}),\label{eq:Psi0-PHJ}
\end{equation}
where the one-body function $v_{i}=v(x_{i})$ accounts for the role
of the external potential $U_{i}=U(x_{i})$ that breaks translational
invariance \citep{delcampo20}. Specifically, if $H_{0}$ is the parent
hamiltonian of $\Phi_{0}$ in Eq.~(\ref{eq:Phi0free}), then $\Psi_{0}$
has the parent Hamiltonian 
\begin{equation}
H=H_{0}+\sum_{i}U_{i}+\frac{\hbar^{2}}{m}\sum_{i<j}(v_{i}'-v_{j}')\frac{f'_{ij}}{f_{ij}},\label{eq:PHJH-embed}
\end{equation}
with the one-body local external potential $U_{i}$ given in terms
of the function $v_{i}$ by 
\begin{equation}
U_{i}=\frac{\hbar^{2}}{2m}\left[(v_{i}')^{2}+v_{i}''\right].\label{eq:U-PHJ}
\end{equation}
As a result, the Hamiltonian $H$ includes the external potential
$U_{i}$ and an additional pairwise (two-body) potential which is
generally of long-range character.

The equivalence between EOF and PHJ require  that the one-body and potential and the additional long-range term
are equal in both representations.  
Comparing Eq.~(\ref{eq:U-EOF})
and Eq.~(\ref{eq:U-PHJ}), the superpotential $W_{i}$ and the function
$v_{i}$ entering the one-body function of the Jastrow form are related
by
\begin{equation}
W_{i}=-\frac{\hbar}{\sqrt{2m}}v_{i}'.\label{eq:1body-correspond}
\end{equation}
Upon substituting Eq.~(\ref{eq:1body-correspond}) into Eqs.~(\ref{eq:EOFH-embed},~\ref{eq:PHJH-embed}),
we find that the additional long-range potentials coincide, given the correspondence
Eq.~(\ref{eq:preV}) is identified. The ground state of the Hamiltonian
with the external potential in terms of the prepotential and the superpotential
is
\begin{equation}
\Psi_{0}=\exp\left(-\frac{\sqrt{2m}}{\hbar}\sum_{i}\int^{x_{i}}W(y)dy\right)\prod_{i<j}\exp\left[\frac{\int^{x_{ij}}dyV(y)}{\zeta \hbar}\right].\label{eq:Psi0-EOF}
\end{equation}
This establishes the equivalence between EOF and PHJ
in the presence of external potential. 

\textit{Integrability via projection formalism.---} 
For quantum systems with classical analog, as the PHJ, one can define quantum integrability by promoting the Poisson bracket into commutators in the
definition of classical integrability. Polychronakos~\citep{Polychronakos92}
pursued along this line and showed that $I_{n}\equiv\sum_{i}\pi_{i}^{n}$
become integrals of motion, i.e., $[I_{n},\,I_{m}]=0,\,\forall n,\,m$,
in the restricted case in which $V_{ijk}$ vanishes or is constant. 
Having shown that
any spinless model described by EOF is a PHJ with a Jastrow ground state, we next establish the
integrability of the complete family of PHJ models, i.e., without restrictions on the three-body potential $V_{ijk}$ or the external potential $U_i$.

Note that any physical observable $\mathcal{O}$ for spinless
indistingusishable particles must be permutation invariant, i.e.,
$\mathcal{O}(x_{1},\,x_{2},\,\cdots,\,x_{N})=\mathcal{O}(x_{\sigma_{1}},\,x_{\sigma_{2}},\cdots,\,x_{\sigma_{N}})$,$\,\forall$
permutation $\sigma$. As a consequence,
\begin{equation}
[\mathcal{P}_{\zeta},\,\mathcal{O}]=0,\label{eq:PO-comm}
\end{equation}
which can be easily checked  by acting on any wave function in
$\mathcal{H}^{\otimes N}$~\citep{SM}. Eq.~(\ref{eq:PO-comm})
implies the a permutation-invariant observable is block diagonal on
$\mathcal{H}_{\zeta}$ and its orthogonal complement. We define an observable is local if it only involves derivatives with respect to the coordinates up to a finite order. Then permutation invariance
and locality implies that if a permutation-invariant and local observable $\mathcal{O}$
vanishes on $\mathcal{H}_{\zeta}$, then it also vanishes on the full
Hilbert space $\mathcal{H}^{\otimes N}$. That is~\footnote{This observation was put forward first by Polychronakos~\citep{Polychronakos92}, using a heuristic locality argument. However, we emphasize the  argument misses an important ingredient, the permutation invariance. In~\citep{SM}, we give a formal proof inspired by the coordinate Bethe ansatz~\citep{Gaudin14,SamajBajnok13}.},

\begin{equation}
\mathcal{P}_{\zeta}\mathcal{O}\mathcal{P}_{\zeta}=\mathcal{P}_{\zeta}\mathcal{O}=\mathcal{O}\mathcal{P}_{\zeta}=0\Longleftrightarrow\mathcal{O}=0,\label{eq:vanishing-theorem}
\end{equation}
for a permutation-invariant and local observable $\mathcal{O}$.

Eqs.~(\ref{eq:PO-comm},~\ref{eq:vanishing-theorem}) lead to the
following theorem regarding the commutators of two permutation-invariant
observables, which is extremely useful in proving integrability.
\begin{thm}
\label{thm:perm-comm}For two permutation-invariant and local observable $\mathcal{O}_{n}$
and $\mathcal{O}_{m}$, the following three conditions are equivalent
to each other (i) $[\mathcal{O}_{n},\,\mathcal{O}_{m}]=0$ (ii) $\mathcal{P}_{\zeta}[\mathcal{O}_{n},\,\mathcal{O}_{m}]\mathcal{P}_{\zeta}=0$
(iii) $[\mathcal{P}_{\zeta}\mathcal{O}_{n}\mathcal{P}_{\zeta},\,\mathcal{P}_{\zeta}\mathcal{O}_{m}\mathcal{P}_{\zeta}]=0$.
\end{thm}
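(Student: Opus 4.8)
The plan is to regard the commutator $\mathcal{C}\equiv[\mathcal{O}_{n},\mathcal{O}_{m}]$ as a single observable and apply the vanishing theorem~(\ref{eq:vanishing-theorem}) to it. The first step is therefore to check that $\mathcal{C}$ inherits the two hypotheses of that theorem. Permutation invariance is immediate: sums, products, and differences of permutation-invariant operators are permutation-invariant, so $\mathcal{C}$ is permutation-invariant and hence $[\mathcal{P}_{\zeta},\mathcal{C}]=0$ by~(\ref{eq:PO-comm}). Locality is equally routine, since the commutator of two differential operators of finite order is again a differential operator of finite order. This is exactly the place where both ingredients flagged in the footnote are used, and neither may be dropped.

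Next I would establish the algebraic identity that collapses conditions (ii) and (iii) into one. Using $[\mathcal{P}_{\zeta},\mathcal{O}_{n}]=[\mathcal{P}_{\zeta},\mathcal{O}_{m}]=0$ from~(\ref{eq:PO-comm}) together with idempotency $\mathcal{P}_{\zeta}^{2}=\mathcal{P}_{\zeta}$, one has $\mathcal{P}_{\zeta}\mathcal{O}_{n}\mathcal{P}_{\zeta}=\mathcal{P}_{\zeta}\mathcal{O}_{n}=\mathcal{O}_{n}\mathcal{P}_{\zeta}$, and likewise for $\mathcal{O}_{m}$. Substituting these into the commutator appearing in (iii) and collecting the projectors yields $[\mathcal{P}_{\zeta}\mathcal{O}_{n}\mathcal{P}_{\zeta},\mathcal{P}_{\zeta}\mathcal{O}_{m}\mathcal{P}_{\zeta}]=\mathcal{P}_{\zeta}[\mathcal{O}_{n},\mathcal{O}_{m}]\mathcal{P}_{\zeta}=\mathcal{P}_{\zeta}\mathcal{C}\mathcal{P}_{\zeta}$. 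Thus (ii) and (iii) are literally the same statement. The implication (i)$\Rightarrow$(ii) is then trivial, since $\mathcal{C}=0$ forces $\mathcal{P}_{\zeta}\mathcal{C}\mathcal{P}_{\zeta}=0$, so only (ii)$\Rightarrow$(i) remains to complete the cycle.

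For (ii)$\Rightarrow$(i) I would invoke the vanishing theorem~(\ref{eq:vanishing-theorem}) applied to $\mathcal{O}=\mathcal{C}$. From $[\mathcal{P}_{\zeta},\mathcal{C}]=0$ and $\mathcal{P}_{\zeta}^{2}=\mathcal{P}_{\zeta}$ one gets $\mathcal{P}_{\zeta}\mathcal{C}\mathcal{P}_{\zeta}=\mathcal{P}_{\zeta}\mathcal{C}=\mathcal{C}\mathcal{P}_{\zeta}$, so the single hypothesis $\mathcal{P}_{\zeta}\mathcal{C}\mathcal{P}_{\zeta}=0$ already delivers all three vanishing conditions on the left-hand side of~(\ref{eq:vanishing-theorem}); that theorem then forces $\mathcal{C}=0$, which is (i). The only genuine content in the argument is the vanishing theorem itself, whose proof (deferred to the Supplemental Material) carries the coordinate-Bethe-ansatz-style reasoning; everything else is bookkeeping with $\mathcal{P}_{\zeta}$. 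I expect no real obstacle here beyond ensuring $\mathcal{C}$ is truly local — e.g., that $\mathcal{O}_{n}$ and $\mathcal{O}_{m}$ are polynomials in the momenta with sufficiently smooth coefficients so that the commutator does not generate distributional or infinite-order contributions — which should be taken as part of the standing assumptions on the observables.
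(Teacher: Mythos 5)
Your proposal is correct and follows essentially the same route as the paper: the equivalence (ii)$\Leftrightarrow$(iii) via the projector algebra $\mathcal{P}_{\zeta}^{2}=\mathcal{P}_{\zeta}$ and Eq.~(\ref{eq:PO-comm}), and (i)$\Leftrightarrow$(ii) by applying the vanishing statement~(\ref{eq:vanishing-theorem}) to the commutator itself. Your explicit check that $[\mathcal{O}_{n},\mathcal{O}_{m}]$ inherits permutation invariance and locality is a point the paper leaves implicit, but it is the same argument.
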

The equivalence between (i) and (ii) is a consequence of Eq.~(\ref{eq:vanishing-theorem}).
The equivalence between (ii) and (iii) follows from
\begin{align*}
\mathcal{P}_{\zeta}[\mathcal{O}_{n},\,\mathcal{O}_{m}]\mathcal{P}_{\zeta} & =\mathcal{P}_{\zeta}\mathcal{O}_{n}\mathcal{O}_{m}\mathcal{P}_{\zeta}^{2}-\mathcal{P}_{\zeta}\mathcal{O}_{m}\mathcal{O}_{n}\mathcal{P}_{\zeta}^{2}\\
 & =\mathcal{P}_{\zeta}\mathcal{O}_{n}\mathcal{P}_{\zeta}\mathcal{O}_{m}\mathcal{P}_{\zeta}-\mathcal{P}_{\zeta}\mathcal{O}_{m}\mathcal{P}_{\zeta}\mathcal{O}_{n}\mathcal{P}_{\zeta}\\
 & =[\mathcal{P}_{\zeta}\mathcal{O}_{n}\mathcal{P}_{\zeta},\,\mathcal{P}_{\zeta}\mathcal{O}_{m}\mathcal{P}_{\zeta}],
\end{align*}
where we have used Eq.~(\ref{eq:PO-comm}).
\begin{thm}
\label{thm:quantum-integrability}Both the quantum mechanical homogenous
model~(\ref{eq:EOFH}) and the inhomogeneous model~(\ref{eq:EOFH-embed})
generated in EOF are integrable, with the integral of motion being
$I_{n}$ for the homogenous model and $\tilde{I}_{n}$ for the inhomogeneous
model.
\end{thm}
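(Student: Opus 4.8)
The plan is to exploit the EOF--PHJ equivalence together with Theorem~\ref{thm:perm-comm}, reducing the claimed commutators to an identity on the physical subspace that follows from the exchange-operator algebra. First I would note that $I_{n}=\sum_{i}\pi_{i}^{n}$ is permutation invariant, since $M_{kl}$ conjugates $\pi_{r}\mapsto\pi_{\sigma_{kl}(r)}$ (this follows from $M_{ij}A_{j}=A_{i}M_{ij}$ and $M_{kl}M_{ij}M_{kl}=M_{\sigma_{kl}(i)\sigma_{kl}(j)}$), and likewise $\tilde{I}_{n}=\sum_{i}h_{i}^{n}$ because $M_{kl}$ conjugates $a_{r}\mapsto a_{\sigma_{kl}(r)}$. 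Using $M_{ij}\mathcal{P}_{\zeta}=\zeta\mathcal{P}_{\zeta}$ and these covariance relations, I would sweep every exchange operator in $\pi_{i}^{n}\mathcal{P}_{\zeta}$ to the right onto $\mathcal{P}_{\zeta}$, where it becomes the scalar $\zeta$; this produces a bona fide $n$-th order differential operator $Q_{n}$ with $I_{n}\mathcal{P}_{\zeta}=Q_{n}\mathcal{P}_{\zeta}$, and similarly $\tilde{I}_{n}\mathcal{P}_{\zeta}=\tilde{Q}_{n}\mathcal{P}_{\zeta}$, with $Q_{n},\tilde{Q}_{n}$ permutation invariant and local. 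Theorem~\ref{thm:perm-comm} then gives $[Q_{n},Q_{m}]=0\Leftrightarrow\mathcal{P}_{\zeta}[Q_{n},Q_{m}]\mathcal{P}_{\zeta}=0$, and the latter equals $[I_{n},I_{m}]\mathcal{P}_{\zeta}$ since $Q_{r}$ commutes with $\mathcal{P}_{\zeta}$ while $Q_{r}\mathcal{P}_{\zeta}=I_{r}\mathcal{P}_{\zeta}=\mathcal{P}_{\zeta}I_{r}$. So everything reduces to showing that $[I_{n},I_{m}]$ (and $[\tilde{I}_{n},\tilde{I}_{m}]$) annihilates $\mathcal{H}_{\zeta}$, after which functional independence of the charges follows from their leading symbols ($\sum_{i}p_{i}^{n}$ in the homogeneous case).

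To prove $[I_{n},I_{m}]\mathcal{P}_{\zeta}=0$ I would gauge away the Jastrow factor: $\nabla_{i}\equiv\Phi_{0}^{-1}\pi_{i}\Phi_{0}=p_{i}+\text{i}\hbar\sum_{k\neq i}(f'_{ik}/f_{ik})(M_{ik}-\mathbb{I})$, a Dunkl-type operator whose $\zeta$-dependence has cancelled and which acts as $p_{i}$ on symmetric wavefunctions since $(M_{ik}-\mathbb{I})\mathcal{P}_{+}=0$. Because $M_{ij}f=\zeta f$ squares to $1$, $\Phi_{0}^{-1}$ maps $\mathcal{H}_{\zeta}$ into the bosonic subspace $\mathcal{H}_{+}$ for both statistics, so it suffices to prove $[\sum_{i}\nabla_{i}^{n},\sum_{j}\nabla_{j}^{m}]\mathcal{P}_{+}=0$. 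Through the telescopic expansion $[\sum_{i}\nabla_{i}^{n},\sum_{j}\nabla_{j}^{m}]=\sum_{i\neq j}\sum_{a,b}\nabla_{i}^{a}\nabla_{j}^{b}[\nabla_{i},\nabla_{j}]\nabla_{j}^{m-1-b}\nabla_{i}^{n-1-a}$, the problem collapses to the structure of $[\nabla_{i},\nabla_{j}]$: a direct computation shows its first-order part cancels, and its zeroth-order part is a sum over $k\notin\{i,j\}$ of three nested commutators $[g_{ab}(\mathbb{I}-M_{ab}),g_{bc}(\mathbb{I}-M_{bc})]$ with $g_{ab}=f'_{ab}/f_{ab}$, each of whose terms carries as rightmost factor either a difference $M_{ab}-M_{bc}$ of transpositions or a factor $\mathbb{I}-M_{ab}$, all of which kill $\mathcal{P}_{+}$. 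Equivalently, once the exchange operators are replaced by scalars, what remains of $[\nabla_{i},\nabla_{j}]$ is a polynomial in the mutually commuting multiplication operators $g_{ij},g_{jk},g_{ik}$ that vanishes identically --- precisely the algebraic identity which in Polychronakos's analysis forces $V_{ijk}$ to be constant, but which after projection holds with no restriction on $f$. Sweeping these residual exchange factors rightward through the intervening $\nabla$'s onto $\mathcal{P}_{+}$ via the covariance relations (the same manipulations underlying the coordinate-Bethe-ansatz proof of Eq.~(\ref{eq:vanishing-theorem})) gives $[\sum_{i}\nabla_{i}^{n},\sum_{j}\nabla_{j}^{m}]\mathcal{P}_{+}=0$, hence $[I_{n},I_{m}]\mathcal{P}_{\zeta}=0$ and $[Q_{n},Q_{m}]=0$.

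For the inhomogeneous model I would repeat this after conjugating by the full ground state $\Psi_{0}=\Phi_{0}\prod_{i}\exp(v_{i})$. The relation $W_{i}=-(\hbar/\sqrt{2m})v'_{i}$ of Eq.~(\ref{eq:1body-correspond}) is exactly what makes $\Psi_{0}^{-1}a_{i}\Psi_{0}=\nabla_{i}/\sqrt{2m}$ --- the external potential and the long-range two-body term cancel against the superpotential contributions --- so $h_{i}=a_{i}^{\dagger}a_{i}$ conjugates to $(2m)^{-1}\Omega^{-1}\nabla_{i}\Omega\nabla_{i}$ with $\Omega=\prod_{i}\exp(2v_{i})$ a permutation-invariant multiplication operator commuting with every $g_{ij}$; hence $[\tilde{I}_{n},\tilde{I}_{m}]$ on $\mathcal{H}_{\zeta}$ is again governed by $[\nabla_{i},\nabla_{j}]\mathcal{P}_{+}=0$, giving $\mathcal{P}_{\zeta}[\tilde{I}_{n},\tilde{I}_{m}]\mathcal{P}_{\zeta}=0$, which Theorem~\ref{thm:perm-comm} upgrades to $[\tilde{Q}_{n},\tilde{Q}_{m}]=0$, the $\tilde{Q}_{n}$ being independent by their leading symbols. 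The main obstacle throughout is the exchange-operator bookkeeping of the second paragraph: one must check in full generality --- for arbitrary $f$, hence arbitrary $V_{ijk}$ and $U_{i}$ --- that the residual exchange factors generated by $[\nabla_{i},\nabla_{j}]$, and all those produced when they are commuted through the chain of $\nabla$'s in $[I_{n},I_{m}]$, ultimately reduce to scalars that cancel on the projected space. A secondary subtlety is that ``local'' in Theorem~\ref{thm:perm-comm} and Eq.~(\ref{eq:vanishing-theorem}) must be invoked for the differential operators $Q_{n},\tilde{Q}_{n}$ rather than for $I_{n},\tilde{I}_{n}$ themselves (which contain exchange operators), and it is this translation that makes the vanishing theorem applicable.
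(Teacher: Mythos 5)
The skeleton of your first paragraph does match the paper's: permutation invariance of $I_{n}$, $\tilde{I}_{n}$, reduction via Theorem~\ref{thm:perm-comm}, and your exchange-free operators $Q_{n},\tilde{Q}_{n}$ are exactly the paper's $K_{n},\tilde{K}_{n}$ (your handling of where ``local'' must be invoked is, if anything, more careful than the paper's). The genuine gap is in your second and third paragraphs: the entire content of the theorem is that $\mathcal{P}_{\zeta}[I_{n},I_{m}]\mathcal{P}_{\zeta}=0$ with \emph{no} restriction on $V_{ijk}$ or on the external potential, and you do not prove this -- you reduce it to ``sweeping the residual exchange factors through the chain of $\nabla$'s and checking that everything cancels,'' and then explicitly flag that check as ``the main obstacle.'' Observing that each term of $[\nabla_{i},\nabla_{j}]$ annihilates $\mathcal{P}_{+}$ is not sufficient: in your telescopic expansion the commutator is separated from the projector by $\nabla_{j}^{m-1-b}\nabla_{i}^{n-1-a}$, and pushing $M_{ijk}-M_{jik}$ through those factors permutes their indices, so the termwise annihilation is destroyed and one is left with a sum over $i,j,k,a,b$ of genuinely different operator strings whose mutual cancellation is precisely the nontrivial claim (it is the step that, without projection, forces $V_{ijk}=\text{const}$ in Polychronakos's analysis). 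The same unproven cancellation is invoked in an even stronger form for the trapped case: the conjugated charges $(2m)^{-1}\Omega^{-1}\nabla_{i}\Omega\nabla_{i}$ generate additional commutators of $\nabla_{j}$ with $\Omega^{\pm1}$ (equivalently with $v_{i}'$), so $[\tilde{I}_{n},\tilde{I}_{m}]$ on $\mathcal{H}_{\zeta}$ is not ``governed by $[\nabla_{i},\nabla_{j}]\mathcal{P}_{+}=0$'' alone.

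What the paper actually uses at the decisive point, and what your proposal never identifies, is that $\pi_{i}$ and $h_{i}$ obey the exchange rules of one-body operators, $M_{ij}\pi_{i}M_{ij}=\pi_{j}$, etc.\ (the ``effective one-body'' property proved in the Supplemental Material), which yields the projected identity $\mathcal{P}_{\zeta}\pi_{i}^{n}\mathcal{P}_{\zeta}=\mathcal{P}_{\zeta}\pi_{j}^{n}\mathcal{P}_{\zeta}$ of Eq.~(\ref{eq:super-symmetic-1body}) and its multi-index generalization (Lemma~\ref{lem:super-symmetric}); condition (iii) of Theorem~\ref{thm:perm-comm} is then argued directly from this property, with no gauge transformation and no expansion of $[\pi_{i},\pi_{j}]$. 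Your Dunkl-type conjugation $\Phi_{0}^{-1}\pi_{i}\Phi_{0}=p_{i}+\mathrm{i}\hbar\sum_{k\neq i}(f'_{ik}/f_{ik})(M_{ik}-\mathbb{I})$ and the identity $\Psi_{0}^{-1}a_{i}\Psi_{0}=\nabla_{i}/\sqrt{2m}$ are correct and could seed an alternative route, but as written the central combinatorial identity is assumed rather than established, so the proposal does not prove the theorem.
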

To prove Theorem~\ref{thm:quantum-integrability}, let us first observe
a very interesting property due to the projection $\mathcal{P}_{\zeta}$
and the exchange operator $M_{ij}$. Although
the generalized momentum $\pi_{i}$ involves $N$ degrees of freedom\textit{
}due to the prepotential term, when it is multiplied by $M_{ij}$
from the left, it still satisfies the exchange rule for $1$-body operators~\cite{SM}. As a
consequence,
\begin{equation}
\mathcal{P}_{\zeta}\pi_{i}^{n}\mathcal{P}_{\zeta}=\mathcal{P}_{\zeta}M_{ij}^{2}\pi_{i}^{n}\mathcal{P}_{\zeta}=\mathcal{P}_{\zeta}M_{ij}\pi_{i}^{n}M_{ij}\mathcal{P}_{\zeta}=\mathcal{P}_{\zeta}\pi_{j}^{n}\mathcal{P}_{\zeta}.\label{eq:super-symmetic-1body}
\end{equation}
A similar equation also holds for $h_{i}$. For a more general version
of this identity, see~\citep{SM}. 

On the other hand, since the integrals of motions $I_{n}$ and $\tilde{I}_{n}$
are permutation invariant and local, one can reduce their commutativity
to condition (iii) in Theorem~\ref{thm:perm-comm}. Using Eq.~(\ref{eq:super-symmetic-1body})
or the analogous equation for $h_{i}$, it follows that the
condition (iii) in Theorem~\ref{thm:perm-comm} is satisfied, with $\mathcal{O}_{n}=I_{n}$
or $\mathcal{O}_{n}=\tilde{I}_{n}$. This concludes the proof of the
integrability of the Hamiltonians~(\ref{eq:EOFH}) and~(\ref{eq:EOFH-embed}).

A few comments are in order. First, since we have proved the equivalence
between EOF and PHJ, the Hamiltonians~(\ref{eq:PHJH}) and~(\ref{eq:PHJH-embed})
are therefore also integrable.
Second, it is possible to build integrals of motions for families
of classical models generated by the EOF and PHJ according to the
quantum-classical correspondence. One can expand the powers in $I_{n}$
and $\tilde{I}_{n}$ and compute $\mathcal{P}_{\zeta}I_{n}\mathcal{P}_{\zeta}$
and $\mathcal{P}_{\zeta}\tilde{I}_{n}\mathcal{P}_{\zeta}$ explicitly
with Eq.~(\ref{eq:MP-PM-P}). Then one is left with the expressions $\mathcal{P}_{\zeta}K_{n}\mathcal{P}_{\zeta}$
and $\mathcal{P}_{\zeta}\tilde{K}_{n}\mathcal{P}_{\zeta}$ , where
$K_{n}$ and $\tilde{K}_{n}$ contain only the phase space variables
but no exchange operators. In particular, we note that $K_{2}=H_{0}$
and $\tilde{K}_{1}=H$; see ~\citep{SM},
where one obtains $H_{0}$ by projecting $I_{2}$ onto $\mathcal{H}_{\zeta}$.
According to Theorem~\ref{thm:perm-comm}, $K_{n}$'s and $\tilde{K}_{n}$'s
must commute on the whole Hilbert space $\mathcal{H}^{\otimes N}$,
respectively. Transitioning to the classical model, where the commutator
is demoted to Poisson brackets, the Poisson brackets of $K_{n}$'s
and $\tilde{K}_{n}$'s must vanish, respectively. Thus, we see that
$K_{n}$'s and $\tilde{K}_{n}$'s are also the integrals of motions
for the classical model with Hamiltonians~(\ref{eq:EOFH},~\ref{eq:PHJH})
and (\ref{eq:EOFH-embed},~\ref{eq:PHJH-embed}), respectively.

{
\textit{Discussion.---} It is worth noting that the Jastrow wave
functions $\Phi_{0}$, $\Psi_{0}$ may not be the true ground state
of the corresponding PHJ if they cannot be properly normalized.
Nevertheless, the family of models generated
by EOF and PHJ is always integrable, regardless of the normalization
of the Jastrow wave function.

For example, if $f_{ij}=\exp(g|x_{ij}|)$, $H_{0}$ becomes the well-known
LL model~\citep{delcampo20}. However, the Jastrow wave function
is normalizable only when $g<0$, which corresponds to the McGuire
bright soliton~\citep{McGuire64}. Therefore $\Phi_{0}$ is no longer
the ground state wave function of the repulsive LL model. However,
as we have discussed previously, the integrability of the Hamiltonian
is not affected, so our result reproduces the integrability of the
LL model with the integral of motion being $I_{n}$ or $K_{n}$. More
interestingly, upon introducing the external harmonic potential, according
to Eq.~(\ref{eq:Psi0-PHJ}), $\Psi_{0}$ becomes normalizable even
if $\Phi_{0}$ is not and Eq.~(\ref{eq:PHJH}) corresponds to the Lieb-Liniger-Coulomb
model introduced in Refs.~\citep{BeauPittman20}, i.e.,
\begin{equation}
H=\sum_{i}\left[\frac{p_{i}^{2}}{2m}+\frac{1}{2}m\omega^{2}x_{i}^{2}\right]+g\sum_{i<j}\left[\frac{2\hbar^{2}}{m}\delta(x_{ij})-\frac{m\omega}{\hbar}|x_{ij}|\right],\label{eq:LLC}
\end{equation}
with ground state $E_0=\frac{N\hbar\omega}{2}-\frac{g^2\hbar^2}{m}\frac{N(N^2-1)}{6}$. This system  describes harmonically confined bosons subject to contact and Coulomb interactions or gravitational attraction in one spatial dimension.
Ref.~\citep{BeauPittman20} characterized its EOF representation and ground state properties. Using Theorem~\ref{thm:quantum-integrability}, we conclude that this system is integrable, with the integrals of motion being $\tilde{I}_{n}\equiv\sum_{i}h_{i}^{n}$. 

Further physical examples of integrable PHJ systems are provided in the Supplemental Material~\cite{SM}, which includes Refs. \cite{Calogero75,Beau17note}. The proof leading to the integrability of PHJ essentially takes advantage of the permutation invariance and EOF. As a result it can be applied to models defined on the real line as well as those embedded in an external potential. Likewise, it  holds for  systems with hard-wall confinement or  a ring geometry, provided the  pair function $f_{ij}$ and the one body potential $v_i$ or $W_i$  fulfill the corresponding boundary conditions.

\textit{Conclusion.---} We have established the equivalence
between the families of one-dimensional many-body quantum systems
generated by the exchange operator formalism and parent Hamiltonians with a ground-state wavefunction of Jastrow form, describing indistinguishable particles with
no internal degrees of freedom. 
 Making use of
the projection operator onto the spinless bosonic or fermionic subspace,
we have proved the integrability of all these systems
by constructing explicitly 
the corresponding integrals of motion.
Embedding these translation-invariant models in an external
potential preserves the integrability, in the presence of long-range interactions, 
as we have illustrated in the long-range Lieb-Liniger model and related systems. 

These findings advance the study of many-body physics by uncovering the implications of ground-state correlations on integrability.  They
 should lead to manifold applications in
 the study of quantum solitons, quantum quenches and the thermalization of isolated integrable systems (governed by integrals of motion), and strongly-correlated regimes, generalizing the super-Tonks-Girardeau gas \cite{STG1}, among others.}
  Our results bear also implications on numerical methods for strongly-correlated systems such as variational methods and quantum Monte Carlo algorithms, in which the ubiquitous use of Jastrow trial wavefunctions may impose integrability on systems lacking it.
It may be possible to extend our results to higher spatial dimensions \citep{BeauDC21}, higher-order correlations \cite{Carleo17},
the inclusion of spin degrees of freedom \citep{Polychronakos06},
mixtures of different species \citep{GirardeauMinguzzi07},  and distinguishable
particles \citep{JainKhare99,Pittman17}.

\textit{Acknowledgement.---} It is a pleasure to acknowledge discussions with Pablo Martinez Azcona and Aritra Kundu.

\let\oldaddcontentsline\addcontentsline     
\renewcommand{\addcontentsline}[3]{}         

\bibliographystyle{apsrev4-1}
\bibliography{BAIQS_lib}

\begin{thebibliography}{61}%
\makeatletter
\providecommand \@ifxundefined [1]{%
 \@ifx{#1\undefined}
}%
\providecommand \@ifnum [1]{%
 \ifnum #1\expandafter \@firstoftwo
 \else \expandafter \@secondoftwo
 \fi
}%
\providecommand \@ifx [1]{%
 \ifx #1\expandafter \@firstoftwo
 \else \expandafter \@secondoftwo
 \fi
}%
\providecommand \natexlab [1]{#1}%
\providecommand \enquote  [1]{``#1''}%
\providecommand \bibnamefont  [1]{#1}%
\providecommand \bibfnamefont [1]{#1}%
\providecommand \citenamefont [1]{#1}%
\providecommand \href@noop [0]{\@secondoftwo}%
\providecommand \href [0]{\begingroup \@sanitize@url \@href}%
\providecommand \@href[1]{\@@startlink{#1}\@@href}%
\providecommand \@@href[1]{\endgroup#1\@@endlink}%
\providecommand \@sanitize@url [0]{\catcode `\\12\catcode `\$12\catcode
  `\&12\catcode `\#12\catcode `\^12\catcode `\_12\catcode `\%12\relax}%
\providecommand \@@startlink[1]{}%
\providecommand \@@endlink[0]{}%
\providecommand \url  [0]{\begingroup\@sanitize@url \@url }%
\providecommand \@url [1]{\endgroup\@href {#1}{\urlprefix }}%
\providecommand \urlprefix  [0]{URL }%
\providecommand \Eprint [0]{\href }%
\providecommand \doibase [0]{http://dx.doi.org/}%
\providecommand \selectlanguage [0]{\@gobble}%
\providecommand \bibinfo  [0]{\@secondoftwo}%
\providecommand \bibfield  [0]{\@secondoftwo}%
\providecommand \translation [1]{[#1]}%
\providecommand \BibitemOpen [0]{}%
\providecommand \bibitemStop [0]{}%
\providecommand \bibitemNoStop [0]{.\EOS\space}%
\providecommand \EOS [0]{\spacefactor3000\relax}%
\providecommand \BibitemShut  [1]{\csname bibitem#1\endcsname}%
\let\auto@bib@innerbib\@empty
\bibitem [{\citenamefont {Dziarmaga}(2010)}]{Dziarmaga10}%
  \BibitemOpen
  \bibfield  {author} {\bibinfo {author} {\bibfnamefont {J.}~\bibnamefont
  {Dziarmaga}},\ }\href {\doibase 10.1080/00018732.2010.514702} {\bibfield
  {journal} {\bibinfo  {journal} {Advances in Physics}\ }\textbf {\bibinfo
  {volume} {59}},\ \bibinfo {pages} {1063} (\bibinfo {year}
  {2010})}\BibitemShut {NoStop}%
\bibitem [{\citenamefont {Mori}\ \emph {et~al.}(2018)\citenamefont {Mori},
  \citenamefont {Ikeda}, \citenamefont {Kaminishi},\ and\ \citenamefont
  {Ueda}}]{Mori18}%
  \BibitemOpen
  \bibfield  {author} {\bibinfo {author} {\bibfnamefont {T.}~\bibnamefont
  {Mori}}, \bibinfo {author} {\bibfnamefont {T.~N.}\ \bibnamefont {Ikeda}},
  \bibinfo {author} {\bibfnamefont {E.}~\bibnamefont {Kaminishi}}, \ and\
  \bibinfo {author} {\bibfnamefont {M.}~\bibnamefont {Ueda}},\ }\href {\doibase
  10.1088/1361-6455/aabcdf} {\bibfield  {journal} {\bibinfo  {journal} {Journal
  of Physics B: Atomic, Molecular and Optical Physics}\ }\textbf {\bibinfo
  {volume} {51}},\ \bibinfo {pages} {112001} (\bibinfo {year}
  {2018})}\BibitemShut {NoStop}%
\bibitem [{\citenamefont {Cazalilla}\ \emph {et~al.}(2011)\citenamefont
  {Cazalilla}, \citenamefont {Citro}, \citenamefont {Giamarchi}, \citenamefont
  {Orignac},\ and\ \citenamefont {Rigol}}]{Cazalilla11}%
  \BibitemOpen
  \bibfield  {author} {\bibinfo {author} {\bibfnamefont {M.~A.}\ \bibnamefont
  {Cazalilla}}, \bibinfo {author} {\bibfnamefont {R.}~\bibnamefont {Citro}},
  \bibinfo {author} {\bibfnamefont {T.}~\bibnamefont {Giamarchi}}, \bibinfo
  {author} {\bibfnamefont {E.}~\bibnamefont {Orignac}}, \ and\ \bibinfo
  {author} {\bibfnamefont {M.}~\bibnamefont {Rigol}},\ }\href {\doibase
  10.1103/RevModPhys.83.1405} {\bibfield  {journal} {\bibinfo  {journal} {Rev.
  Mod. Phys.}\ }\textbf {\bibinfo {volume} {83}},\ \bibinfo {pages} {1405}
  (\bibinfo {year} {2011})}\BibitemShut {NoStop}%
\bibitem [{\citenamefont {Guan}\ \emph {et~al.}(2013)\citenamefont {Guan},
  \citenamefont {Batchelor},\ and\ \citenamefont {Lee}}]{GuanBatchelor13}%
  \BibitemOpen
  \bibfield  {author} {\bibinfo {author} {\bibfnamefont {X.-W.}\ \bibnamefont
  {Guan}}, \bibinfo {author} {\bibfnamefont {M.~T.}\ \bibnamefont {Batchelor}},
  \ and\ \bibinfo {author} {\bibfnamefont {C.}~\bibnamefont {Lee}},\ }\href
  {\doibase 10.1103/RevModPhys.85.1633} {\bibfield  {journal} {\bibinfo
  {journal} {Rev. Mod. Phys.}\ }\textbf {\bibinfo {volume} {85}},\ \bibinfo
  {pages} {1633} (\bibinfo {year} {2013})}\BibitemShut {NoStop}%
\bibitem [{\citenamefont {Takahashi}(1999)}]{Takahashi99}%
  \BibitemOpen
  \bibfield  {author} {\bibinfo {author} {\bibfnamefont {M.}~\bibnamefont
  {Takahashi}},\ }\href@noop {} {\emph {\bibinfo {title} {Thermodynamics of
  One-Dimensional Solvable Models}}}\ (\bibinfo  {publisher} {Cambridge,
  Cambridge},\ \bibinfo {year} {1999})\BibitemShut {NoStop}%
\bibitem [{\citenamefont {Gaudin}(2014)}]{Gaudin14}%
  \BibitemOpen
  \bibfield  {author} {\bibinfo {author} {\bibfnamefont {M.}~\bibnamefont
  {Gaudin}},\ }\href {\doibase 10.1017/CBO9781107053885} {\emph {\bibinfo
  {title} {The Bethe Wavefunction}}},\ edited by\ \bibinfo {editor}
  {\bibfnamefont {J.-S.}\ \bibnamefont {Caux}}\ (\bibinfo  {publisher}
  {Cambridge University Press},\ \bibinfo {year} {2014})\BibitemShut {NoStop}%
\bibitem [{\citenamefont {Polychronakos}(1992)}]{Polychronakos92}%
  \BibitemOpen
  \bibfield  {author} {\bibinfo {author} {\bibfnamefont {A.~P.}\ \bibnamefont
  {Polychronakos}},\ }\href {\doibase 10.1103/PhysRevLett.69.703} {\bibfield
  {journal} {\bibinfo  {journal} {Phys. Rev. Lett.}\ }\textbf {\bibinfo
  {volume} {69}},\ \bibinfo {pages} {703} (\bibinfo {year} {1992})}\BibitemShut
  {NoStop}%
\bibitem [{\citenamefont {Polychronakos}(2006)}]{Polychronakos06}%
  \BibitemOpen
  \bibfield  {author} {\bibinfo {author} {\bibfnamefont {A.~P.}\ \bibnamefont
  {Polychronakos}},\ }\href {\doibase 10.1088/0305-4470/39/41/s07} {\bibfield
  {journal} {\bibinfo  {journal} {Journal of Physics A: Mathematical and
  General}\ }\textbf {\bibinfo {volume} {39}},\ \bibinfo {pages} {12793}
  (\bibinfo {year} {2006})}\BibitemShut {NoStop}%
\bibitem [{\citenamefont {Lapointe}\ and\ \citenamefont
  {Vinet}(1996)}]{lapointe1996exact}%
  \BibitemOpen
  \bibfield  {author} {\bibinfo {author} {\bibfnamefont {L.}~\bibnamefont
  {Lapointe}}\ and\ \bibinfo {author} {\bibfnamefont {L.}~\bibnamefont
  {Vinet}},\ }\href@noop {} {\bibfield  {journal} {\bibinfo  {journal}
  {Communications in mathematical physics}\ }\textbf {\bibinfo {volume}
  {178}},\ \bibinfo {pages} {425} (\bibinfo {year} {1996})}\BibitemShut
  {NoStop}%
\bibitem [{\citenamefont {Ujino}\ and\ \citenamefont
  {Wadati}(1996)}]{ujino1996algebraic}%
  \BibitemOpen
  \bibfield  {author} {\bibinfo {author} {\bibfnamefont {H.}~\bibnamefont
  {Ujino}}\ and\ \bibinfo {author} {\bibfnamefont {M.}~\bibnamefont {Wadati}},\
  }\href@noop {} {\bibfield  {journal} {\bibinfo  {journal} {Journal of the
  Physical Society of Japan}\ }\textbf {\bibinfo {volume} {65}},\ \bibinfo
  {pages} {653} (\bibinfo {year} {1996})}\BibitemShut {NoStop}%
\bibitem [{\citenamefont {Yang}(1967)}]{yang1967someexact}%
  \BibitemOpen
  \bibfield  {author} {\bibinfo {author} {\bibfnamefont {C.~N.}\ \bibnamefont
  {Yang}},\ }\href {\doibase 10.1103/PhysRevLett.19.1312} {\bibfield  {journal}
  {\bibinfo  {journal} {Physical Review Letters}\ }\textbf {\bibinfo {volume}
  {19}},\ \bibinfo {pages} {1312} (\bibinfo {year} {1967})}\BibitemShut
  {NoStop}%
\bibitem [{\citenamefont {Baxter}(1978)}]{baxter1978solvable}%
  \BibitemOpen
  \bibfield  {author} {\bibinfo {author} {\bibfnamefont {R.~J.}\ \bibnamefont
  {Baxter}},\ }\href {\doibase 10.1098/rsta.1978.0062} {\bibfield  {journal}
  {\bibinfo  {journal} {Philosophical Transactions of the Royal Society of
  London. Series A, Mathematical and Physical Sciences}\ }\textbf {\bibinfo
  {volume} {289}},\ \bibinfo {pages} {315} (\bibinfo {year}
  {1978})}\BibitemShut {NoStop}%
\bibitem [{\citenamefont {Korepin}\ \emph {et~al.}(1997)\citenamefont
  {Korepin}, \citenamefont {Bogoliubov},\ and\ \citenamefont
  {Izergin}}]{KBI97}%
  \BibitemOpen
  \bibfield  {author} {\bibinfo {author} {\bibfnamefont {V.~E.}\ \bibnamefont
  {Korepin}}, \bibinfo {author} {\bibfnamefont {N.~M.}\ \bibnamefont
  {Bogoliubov}}, \ and\ \bibinfo {author} {\bibfnamefont {A.~G.}\ \bibnamefont
  {Izergin}},\ }\href@noop {} {\emph {\bibinfo {title} {Quantum Inverse
  Scattering Method and Correlation Functions}}}\ (\bibinfo  {publisher}
  {Cambridge, Cambridge},\ \bibinfo {year} {1997})\BibitemShut {NoStop}%
\bibitem [{\citenamefont {Sutherland}(2004)}]{Sutherland04}%
  \BibitemOpen
  \bibfield  {author} {\bibinfo {author} {\bibfnamefont {B.}~\bibnamefont
  {Sutherland}},\ }\href {\doibase 10.1142/5552} {\emph {\bibinfo {title}
  {Beautiful Models}}}\ (\bibinfo  {publisher} {World Scientific},\ \bibinfo
  {year} {2004})\BibitemShut {NoStop}%
\bibitem [{\citenamefont {Sutherland}(1995)}]{Sutherland95}%
  \BibitemOpen
  \bibfield  {author} {\bibinfo {author} {\bibfnamefont {B.}~\bibnamefont
  {Sutherland}},\ }\href {\doibase 10.1103/PhysRevLett.75.1248} {\bibfield
  {journal} {\bibinfo  {journal} {Phys. Rev. Lett.}\ }\textbf {\bibinfo
  {volume} {75}},\ \bibinfo {pages} {1248} (\bibinfo {year}
  {1995})}\BibitemShut {NoStop}%
\bibitem [{\citenamefont {Caux}\ and\ \citenamefont {Mossel}(2011)}]{Caux11}%
  \BibitemOpen
  \bibfield  {author} {\bibinfo {author} {\bibfnamefont {J.-S.}\ \bibnamefont
  {Caux}}\ and\ \bibinfo {author} {\bibfnamefont {J.}~\bibnamefont {Mossel}},\
  }\href {\doibase 10.1088/1742-5468/2011/02/p02023} {\bibfield  {journal}
  {\bibinfo  {journal} {Journal of Statistical Mechanics: Theory and
  Experiment}\ }\textbf {\bibinfo {volume} {2011}},\ \bibinfo {pages} {P02023}
  (\bibinfo {year} {2011})}\BibitemShut {NoStop}%
\bibitem [{\citenamefont {Ujino}\ and\ \citenamefont
  {Wadati}(1995)}]{ujino1995correspondence}%
  \BibitemOpen
  \bibfield  {author} {\bibinfo {author} {\bibfnamefont {H.}~\bibnamefont
  {Ujino}}\ and\ \bibinfo {author} {\bibfnamefont {M.}~\bibnamefont {Wadati}},\
  }\href {\doibase 10.1143/jpsj.64.4121} {\bibfield  {journal} {\bibinfo
  {journal} {Journal of the Physical Society of Japan}\ }\textbf {\bibinfo
  {volume} {64}},\ \bibinfo {pages} {4121} (\bibinfo {year}
  {1995})}\BibitemShut {NoStop}%
\bibitem [{\citenamefont {Polychronakos}(2020)}]{Polychronakos20}%
  \BibitemOpen
  \bibfield  {author} {\bibinfo {author} {\bibfnamefont {A.~P.}\ \bibnamefont
  {Polychronakos}},\ }\href {\doibase
  https://doi.org/10.1016/j.nuclphysb.2020.115243} {\bibfield  {journal}
  {\bibinfo  {journal} {Nuclear Physics B}\ }\textbf {\bibinfo {volume}
  {961}},\ \bibinfo {pages} {115243} (\bibinfo {year} {2020})}\BibitemShut
  {NoStop}%
\bibitem [{\citenamefont {Bijl}(1940)}]{Bijl40}%
  \BibitemOpen
  \bibfield  {author} {\bibinfo {author} {\bibfnamefont {A.}~\bibnamefont
  {Bijl}},\ }\href {\doibase https://doi.org/10.1016/0031-8914(40)90166-5}
  {\bibfield  {journal} {\bibinfo  {journal} {Physica}\ }\textbf {\bibinfo
  {volume} {7}},\ \bibinfo {pages} {869} (\bibinfo {year} {1940})}\BibitemShut
  {NoStop}%
\bibitem [{\citenamefont {Dingle}(1949)}]{Dingle49}%
  \BibitemOpen
  \bibfield  {author} {\bibinfo {author} {\bibfnamefont {R.~B.}\ \bibnamefont
  {Dingle}},\ }\href {\doibase 10.1080/14786444908521743} {\bibfield  {journal}
  {\bibinfo  {journal} {The London, Edinburgh, and Dublin Philosophical
  Magazine and Journal of Science}\ }\textbf {\bibinfo {volume} {40}},\
  \bibinfo {pages} {573} (\bibinfo {year} {1949})}\BibitemShut {NoStop}%
\bibitem [{\citenamefont {Jastrow}(1955)}]{Jastrow55}%
  \BibitemOpen
  \bibfield  {author} {\bibinfo {author} {\bibfnamefont {R.}~\bibnamefont
  {Jastrow}},\ }\href {\doibase 10.1103/PhysRev.98.1479} {\bibfield  {journal}
  {\bibinfo  {journal} {Phys. Rev.}\ }\textbf {\bibinfo {volume} {98}},\
  \bibinfo {pages} {1479} (\bibinfo {year} {1955})}\BibitemShut {NoStop}%
\bibitem [{\citenamefont {Calogero}(1971)}]{Calogero71}%
  \BibitemOpen
  \bibfield  {author} {\bibinfo {author} {\bibfnamefont {F.}~\bibnamefont
  {Calogero}},\ }\href {\doibase 10.1063/1.1665604} {\bibfield  {journal}
  {\bibinfo  {journal} {Journal of Mathematical Physics}\ }\textbf {\bibinfo
  {volume} {12}},\ \bibinfo {pages} {419} (\bibinfo {year} {1971})}\BibitemShut
  {NoStop}%
\bibitem [{\citenamefont {Sutherland}(1971{\natexlab{a}})}]{Sutherland71}%
  \BibitemOpen
  \bibfield  {author} {\bibinfo {author} {\bibfnamefont {B.}~\bibnamefont
  {Sutherland}},\ }\href {\doibase 10.1063/1.1665584} {\bibfield  {journal}
  {\bibinfo  {journal} {Journal of Mathematical Physics}\ }\textbf {\bibinfo
  {volume} {12}},\ \bibinfo {pages} {246} (\bibinfo {year}
  {1971}{\natexlab{a}})}\BibitemShut {NoStop}%
\bibitem [{\citenamefont {Sutherland}(1971{\natexlab{b}})}]{Sutherland71pra}%
  \BibitemOpen
  \bibfield  {author} {\bibinfo {author} {\bibfnamefont {B.}~\bibnamefont
  {Sutherland}},\ }\href {\doibase 10.1103/PhysRevA.4.2019} {\bibfield
  {journal} {\bibinfo  {journal} {Phys. Rev. A}\ }\textbf {\bibinfo {volume}
  {4}},\ \bibinfo {pages} {2019} (\bibinfo {year}
  {1971}{\natexlab{b}})}\BibitemShut {NoStop}%
\bibitem [{\citenamefont {Calogero}\ and\ \citenamefont
  {Marchioro}(1973)}]{CalogeroMarchioro75}%
  \BibitemOpen
  \bibfield  {author} {\bibinfo {author} {\bibfnamefont {F.}~\bibnamefont
  {Calogero}}\ and\ \bibinfo {author} {\bibfnamefont {C.}~\bibnamefont
  {Marchioro}},\ }\href {\doibase 10.1063/1.1666291} {\bibfield  {journal}
  {\bibinfo  {journal} {Journal of Mathematical Physics}\ }\textbf {\bibinfo
  {volume} {14}},\ \bibinfo {pages} {182} (\bibinfo {year} {1973})}\BibitemShut
  {NoStop}%
\bibitem [{\citenamefont {Gambardella}(1975)}]{Gambardella75}%
  \BibitemOpen
  \bibfield  {author} {\bibinfo {author} {\bibfnamefont {P.~J.}\ \bibnamefont
  {Gambardella}},\ }\href {\doibase 10.1063/1.522651} {\bibfield  {journal}
  {\bibinfo  {journal} {Journal of Mathematical Physics}\ }\textbf {\bibinfo
  {volume} {16}},\ \bibinfo {pages} {1172} (\bibinfo {year}
  {1975})}\BibitemShut {NoStop}%
\bibitem [{\citenamefont {del Campo}(2020)}]{delcampo20}%
  \BibitemOpen
  \bibfield  {author} {\bibinfo {author} {\bibfnamefont {A.}~\bibnamefont {del
  Campo}},\ }\href {\doibase 10.1103/PhysRevResearch.2.043114} {\bibfield
  {journal} {\bibinfo  {journal} {Phys. Rev. Research}\ }\textbf {\bibinfo
  {volume} {2}},\ \bibinfo {pages} {043114} (\bibinfo {year}
  {2020})}\BibitemShut {NoStop}%
\bibitem [{\citenamefont {Beau}\ and\ \citenamefont {del
  Campo}(2021)}]{BeauDC21}%
  \BibitemOpen
  \bibfield  {author} {\bibinfo {author} {\bibfnamefont {M.}~\bibnamefont
  {Beau}}\ and\ \bibinfo {author} {\bibfnamefont {A.}~\bibnamefont {del
  Campo}},\ }\href {\doibase 10.21468/SciPostPhysCore.4.4.030} {\bibfield
  {journal} {\bibinfo  {journal} {SciPost Phys. Core}\ }\textbf {\bibinfo
  {volume} {4}},\ \bibinfo {pages} {30} (\bibinfo {year} {2021})}\BibitemShut
  {NoStop}%
\bibitem [{\citenamefont {Kane}\ \emph {et~al.}(1991)\citenamefont {Kane},
  \citenamefont {Kivelson}, \citenamefont {Lee},\ and\ \citenamefont
  {Zhang}}]{Kane91}%
  \BibitemOpen
  \bibfield  {author} {\bibinfo {author} {\bibfnamefont {C.~L.}\ \bibnamefont
  {Kane}}, \bibinfo {author} {\bibfnamefont {S.}~\bibnamefont {Kivelson}},
  \bibinfo {author} {\bibfnamefont {D.~H.}\ \bibnamefont {Lee}}, \ and\
  \bibinfo {author} {\bibfnamefont {S.~C.}\ \bibnamefont {Zhang}},\ }\href
  {\doibase 10.1103/PhysRevB.43.3255} {\bibfield  {journal} {\bibinfo
  {journal} {Phys. Rev. B}\ }\textbf {\bibinfo {volume} {43}},\ \bibinfo
  {pages} {3255} (\bibinfo {year} {1991})}\BibitemShut {NoStop}%
\bibitem [{\citenamefont {Calogero}(1975)}]{Calogero75}%
  \BibitemOpen
  \bibfield  {author} {\bibinfo {author} {\bibfnamefont {F.}~\bibnamefont
  {Calogero}},\ }\href {\doibase 10.1007/BF02790495} {\bibfield  {journal}
  {\bibinfo  {journal} {Lett. Nuovo Cim.}\ }\textbf {\bibinfo {volume} {3}},\
  \bibinfo {pages} {411} (\bibinfo {year} {1975})}\BibitemShut {NoStop}%
\bibitem [{\citenamefont {\v{S}amaj}\ and\ \citenamefont
  {Bajnok}(2013)}]{SamajBajnok13}%
  \BibitemOpen
  \bibfield  {author} {\bibinfo {author} {\bibfnamefont {L.}~\bibnamefont
  {\v{S}amaj}}\ and\ \bibinfo {author} {\bibfnamefont {Z.}~\bibnamefont
  {Bajnok}},\ }\href {\doibase 10.1017/CBO9781139343480} {\emph {\bibinfo
  {title} {Introduction to the Statistical Physics of Integrable Many-body
  Systems}}}\ (\bibinfo  {publisher} {Cambridge University Press},\ \bibinfo
  {year} {2013})\BibitemShut {NoStop}%
\bibitem [{\citenamefont {Kinoshita}\ \emph {et~al.}(2004)\citenamefont
  {Kinoshita}, \citenamefont {Wenger},\ and\ \citenamefont
  {Weiss}}]{Kinoshita04}%
  \BibitemOpen
  \bibfield  {author} {\bibinfo {author} {\bibfnamefont {T.}~\bibnamefont
  {Kinoshita}}, \bibinfo {author} {\bibfnamefont {T.}~\bibnamefont {Wenger}}, \
  and\ \bibinfo {author} {\bibfnamefont {D.~S.}\ \bibnamefont {Weiss}},\ }\href
  {\doibase 10.1126/science.1100700} {\bibfield  {journal} {\bibinfo  {journal}
  {Science}\ }\textbf {\bibinfo {volume} {305}},\ \bibinfo {pages} {1125}
  (\bibinfo {year} {2004})}\BibitemShut {NoStop}%
\bibitem [{\citenamefont {Paredes}\ \emph {et~al.}(2004)\citenamefont
  {Paredes}, \citenamefont {Widera}, \citenamefont {Murg}, \citenamefont
  {Mandel}, \citenamefont {F{\"o}lling}, \citenamefont {Cirac}, \citenamefont
  {Shlyapnikov}, \citenamefont {H{\"a}nsch},\ and\ \citenamefont
  {Bloch}}]{Paredes04}%
  \BibitemOpen
  \bibfield  {author} {\bibinfo {author} {\bibfnamefont {B.}~\bibnamefont
  {Paredes}}, \bibinfo {author} {\bibfnamefont {A.}~\bibnamefont {Widera}},
  \bibinfo {author} {\bibfnamefont {V.}~\bibnamefont {Murg}}, \bibinfo {author}
  {\bibfnamefont {O.}~\bibnamefont {Mandel}}, \bibinfo {author} {\bibfnamefont
  {S.}~\bibnamefont {F{\"o}lling}}, \bibinfo {author} {\bibfnamefont
  {I.}~\bibnamefont {Cirac}}, \bibinfo {author} {\bibfnamefont {G.~V.}\
  \bibnamefont {Shlyapnikov}}, \bibinfo {author} {\bibfnamefont {T.~W.}\
  \bibnamefont {H{\"a}nsch}}, \ and\ \bibinfo {author} {\bibfnamefont
  {I.}~\bibnamefont {Bloch}},\ }\href {\doibase 10.1038/nature02530} {\bibfield
   {journal} {\bibinfo  {journal} {Nature}\ }\textbf {\bibinfo {volume}
  {429}},\ \bibinfo {pages} {277} (\bibinfo {year} {2004})}\BibitemShut
  {NoStop}%
\bibitem [{\citenamefont {Girardeau}(1960)}]{Girardeau60}%
  \BibitemOpen
  \bibfield  {author} {\bibinfo {author} {\bibfnamefont {M.}~\bibnamefont
  {Girardeau}},\ }\href {\doibase 10.1063/1.1703687} {\bibfield  {journal}
  {\bibinfo  {journal} {Journal of Mathematical Physics}\ }\textbf {\bibinfo
  {volume} {1}},\ \bibinfo {pages} {516} (\bibinfo {year} {1960})}\BibitemShut
  {NoStop}%
\bibitem [{\citenamefont {Girardeau}\ \emph {et~al.}(2001)\citenamefont
  {Girardeau}, \citenamefont {Wright},\ and\ \citenamefont {Triscari}}]{GWT01}%
  \BibitemOpen
  \bibfield  {author} {\bibinfo {author} {\bibfnamefont {M.~D.}\ \bibnamefont
  {Girardeau}}, \bibinfo {author} {\bibfnamefont {E.~M.}\ \bibnamefont
  {Wright}}, \ and\ \bibinfo {author} {\bibfnamefont {J.~M.}\ \bibnamefont
  {Triscari}},\ }\href {\doibase 10.1103/PhysRevA.63.033601} {\bibfield
  {journal} {\bibinfo  {journal} {Phys. Rev. A}\ }\textbf {\bibinfo {volume}
  {63}},\ \bibinfo {pages} {033601} (\bibinfo {year} {2001})}\BibitemShut
  {NoStop}%
\bibitem [{\citenamefont {Girardeau}\ and\ \citenamefont
  {Minguzzi}(2007)}]{GirardeauMinguzzi07}%
  \BibitemOpen
  \bibfield  {author} {\bibinfo {author} {\bibfnamefont {M.~D.}\ \bibnamefont
  {Girardeau}}\ and\ \bibinfo {author} {\bibfnamefont {A.}~\bibnamefont
  {Minguzzi}},\ }\href {\doibase 10.1103/PhysRevLett.99.230402} {\bibfield
  {journal} {\bibinfo  {journal} {Phys. Rev. Lett.}\ }\textbf {\bibinfo
  {volume} {99}},\ \bibinfo {pages} {230402} (\bibinfo {year}
  {2007})}\BibitemShut {NoStop}%
\bibitem [{\citenamefont {Girardeau}\ \emph {et~al.}(2004)\citenamefont
  {Girardeau}, \citenamefont {Nguyen},\ and\ \citenamefont
  {Olshanii}}]{Girardeau04}%
  \BibitemOpen
  \bibfield  {author} {\bibinfo {author} {\bibfnamefont {M.}~\bibnamefont
  {Girardeau}}, \bibinfo {author} {\bibfnamefont {H.}~\bibnamefont {Nguyen}}, \
  and\ \bibinfo {author} {\bibfnamefont {M.}~\bibnamefont {Olshanii}},\ }\href
  {\doibase https://doi.org/10.1016/j.optcom.2004.09.079} {\bibfield  {journal}
  {\bibinfo  {journal} {Optics Communications}\ }\textbf {\bibinfo {volume}
  {243}},\ \bibinfo {pages} {3 } (\bibinfo {year} {2004})}\BibitemShut
  {NoStop}%
\bibitem [{\citenamefont {Kawakami}(1993)}]{Kawakami93}%
  \BibitemOpen
  \bibfield  {author} {\bibinfo {author} {\bibfnamefont {N.}~\bibnamefont
  {Kawakami}},\ }\href {\doibase 10.1143/JPSJ.62.4163} {\bibfield  {journal}
  {\bibinfo  {journal} {Journal of the Physical Society of Japan}\ }\textbf
  {\bibinfo {volume} {62}},\ \bibinfo {pages} {4163} (\bibinfo {year}
  {1993})}\BibitemShut {NoStop}%
\bibitem [{\citenamefont {Vacek}\ \emph {et~al.}(1994)\citenamefont {Vacek},
  \citenamefont {Okiji},\ and\ \citenamefont {Kawakami}}]{Vacek94}%
  \BibitemOpen
  \bibfield  {author} {\bibinfo {author} {\bibfnamefont {K.}~\bibnamefont
  {Vacek}}, \bibinfo {author} {\bibfnamefont {A.}~\bibnamefont {Okiji}}, \ and\
  \bibinfo {author} {\bibfnamefont {N.}~\bibnamefont {Kawakami}},\ }\href
  {\doibase 10.1088/0305-4470/27/7/002} {\bibfield  {journal} {\bibinfo
  {journal} {Journal of Physics A: Mathematical and General}\ }\textbf
  {\bibinfo {volume} {27}},\ \bibinfo {pages} {L201} (\bibinfo {year}
  {1994})}\BibitemShut {NoStop}%
\bibitem [{\citenamefont {Gurappa}\ and\ \citenamefont
  {Panigrahi}(1999)}]{Gurappa99}%
  \BibitemOpen
  \bibfield  {author} {\bibinfo {author} {\bibfnamefont {N.}~\bibnamefont
  {Gurappa}}\ and\ \bibinfo {author} {\bibfnamefont {P.~K.}\ \bibnamefont
  {Panigrahi}},\ }\href {\doibase 10.1103/PhysRevB.59.R2490} {\bibfield
  {journal} {\bibinfo  {journal} {Phys. Rev. B}\ }\textbf {\bibinfo {volume}
  {59}},\ \bibinfo {pages} {R2490} (\bibinfo {year} {1999})}\BibitemShut
  {NoStop}%
\bibitem [{\citenamefont {Olshanii}(1998)}]{Olshanii98}%
  \BibitemOpen
  \bibfield  {author} {\bibinfo {author} {\bibfnamefont {M.}~\bibnamefont
  {Olshanii}},\ }\href {\doibase 10.1103/PhysRevLett.81.938} {\bibfield
  {journal} {\bibinfo  {journal} {Phys. Rev. Lett.}\ }\textbf {\bibinfo
  {volume} {81}},\ \bibinfo {pages} {938} (\bibinfo {year} {1998})}\BibitemShut
  {NoStop}%
\bibitem [{\citenamefont {Lieb}\ \emph {et~al.}(2003)\citenamefont {Lieb},
  \citenamefont {Seiringer},\ and\ \citenamefont {Yngvason}}]{LSY03}%
  \BibitemOpen
  \bibfield  {author} {\bibinfo {author} {\bibfnamefont {E.~H.}\ \bibnamefont
  {Lieb}}, \bibinfo {author} {\bibfnamefont {R.}~\bibnamefont {Seiringer}}, \
  and\ \bibinfo {author} {\bibfnamefont {J.}~\bibnamefont {Yngvason}},\ }\href
  {\doibase 10.1103/PhysRevLett.91.150401} {\bibfield  {journal} {\bibinfo
  {journal} {Phys. Rev. Lett.}\ }\textbf {\bibinfo {volume} {91}},\ \bibinfo
  {pages} {150401} (\bibinfo {year} {2003})}\BibitemShut {NoStop}%
\bibitem [{\citenamefont {McGuire}(1964)}]{McGuire64}%
  \BibitemOpen
  \bibfield  {author} {\bibinfo {author} {\bibfnamefont {J.~B.}\ \bibnamefont
  {McGuire}},\ }\href {\doibase 10.1063/1.1704156} {\bibfield  {journal}
  {\bibinfo  {journal} {Journal of Mathematical Physics}\ }\textbf {\bibinfo
  {volume} {5}},\ \bibinfo {pages} {622} (\bibinfo {year} {1964})}\BibitemShut
  {NoStop}%
\bibitem [{\citenamefont {Lieb}\ and\ \citenamefont {Liniger}(1963)}]{LL63}%
  \BibitemOpen
  \bibfield  {author} {\bibinfo {author} {\bibfnamefont {E.~H.}\ \bibnamefont
  {Lieb}}\ and\ \bibinfo {author} {\bibfnamefont {W.}~\bibnamefont {Liniger}},\
  }\href {\doibase 10.1103/PhysRev.130.1605} {\bibfield  {journal} {\bibinfo
  {journal} {Phys. Rev.}\ }\textbf {\bibinfo {volume} {130}},\ \bibinfo {pages}
  {1605} (\bibinfo {year} {1963})}\BibitemShut {NoStop}%
\bibitem [{\citenamefont {Lieb}(1963)}]{L63}%
  \BibitemOpen
  \bibfield  {author} {\bibinfo {author} {\bibfnamefont {E.~H.}\ \bibnamefont
  {Lieb}},\ }\href {\doibase 10.1103/PhysRev.130.1616} {\bibfield  {journal}
  {\bibinfo  {journal} {Phys. Rev.}\ }\textbf {\bibinfo {volume} {130}},\
  \bibinfo {pages} {1616} (\bibinfo {year} {1963})}\BibitemShut {NoStop}%
\bibitem [{\citenamefont {Beau}\ \emph {et~al.}(2020)\citenamefont {Beau},
  \citenamefont {Pittman}, \citenamefont {Astrakharchik},\ and\ \citenamefont
  {del Campo}}]{BeauPittman20}%
  \BibitemOpen
  \bibfield  {author} {\bibinfo {author} {\bibfnamefont {M.}~\bibnamefont
  {Beau}}, \bibinfo {author} {\bibfnamefont {S.~M.}\ \bibnamefont {Pittman}},
  \bibinfo {author} {\bibfnamefont {G.~E.}\ \bibnamefont {Astrakharchik}}, \
  and\ \bibinfo {author} {\bibfnamefont {A.}~\bibnamefont {del Campo}},\ }\href
  {\doibase 10.1103/PhysRevLett.125.220602} {\bibfield  {journal} {\bibinfo
  {journal} {Phys. Rev. Lett.}\ }\textbf {\bibinfo {volume} {125}},\ \bibinfo
  {pages} {220602} (\bibinfo {year} {2020})}\BibitemShut {NoStop}%
\bibitem [{\citenamefont {Sutherland}(1971{\natexlab{c}})}]{Sutherland71b}%
  \BibitemOpen
  \bibfield  {author} {\bibinfo {author} {\bibfnamefont {B.}~\bibnamefont
  {Sutherland}},\ }\href {\doibase 10.1063/1.1665585} {\bibfield  {journal}
  {\bibinfo  {journal} {Journal of Mathematical Physics}\ }\textbf {\bibinfo
  {volume} {12}},\ \bibinfo {pages} {251} (\bibinfo {year}
  {1971}{\natexlab{c}})}\BibitemShut {NoStop}%
\bibitem [{\citenamefont {Moser}(1975)}]{Moser75}%
  \BibitemOpen
  \bibfield  {author} {\bibinfo {author} {\bibfnamefont {J.}~\bibnamefont
  {Moser}},\ }\href {\doibase https://doi.org/10.1016/0001-8708(75)90151-6}
  {\bibfield  {journal} {\bibinfo  {journal} {Advances in Mathematics}\
  }\textbf {\bibinfo {volume} {16}},\ \bibinfo {pages} {197} (\bibinfo {year}
  {1975})}\BibitemShut {NoStop}%
\bibitem [{\citenamefont {Yang}\ and\ \citenamefont {del Campo}(2022)}]{SM}%
  \BibitemOpen
  \bibfield  {author} {\bibinfo {author} {\bibfnamefont {J.}~\bibnamefont
  {Yang}}\ and\ \bibinfo {author} {\bibfnamefont {A.}~\bibnamefont {del
  Campo}},\ }\href@noop {} {\emph {\bibinfo {title} {Supplemental Material}}}
  (\bibinfo {year} {2022})\BibitemShut {NoStop}%
\bibitem [{\citenamefont {Negele}\ and\ \citenamefont
  {Orland}(1998)}]{negele1998quantum}%
  \BibitemOpen
  \bibfield  {author} {\bibinfo {author} {\bibfnamefont {J.~W.}\ \bibnamefont
  {Negele}}\ and\ \bibinfo {author} {\bibfnamefont {H.}~\bibnamefont
  {Orland}},\ }\href@noop {} {\emph {\bibinfo {title} {Quantum {{Many-particle
  Systems}}}}}\ (\bibinfo  {publisher} {{Perseus Books}},\ \bibinfo {address}
  {{Reading, MA}},\ \bibinfo {year} {1998})\BibitemShut {NoStop}%
\bibitem [{\citenamefont {Kundu}(1999)}]{Kundu99}%
  \BibitemOpen
  \bibfield  {author} {\bibinfo {author} {\bibfnamefont {A.}~\bibnamefont
  {Kundu}},\ }\href {\doibase 10.1103/PhysRevLett.83.1275} {\bibfield
  {journal} {\bibinfo  {journal} {Phys. Rev. Lett.}\ }\textbf {\bibinfo
  {volume} {83}},\ \bibinfo {pages} {1275} (\bibinfo {year}
  {1999})}\BibitemShut {NoStop}%
\bibitem [{\citenamefont {Girardeau}(2006)}]{Girardeau06}%
  \BibitemOpen
  \bibfield  {author} {\bibinfo {author} {\bibfnamefont {M.~D.}\ \bibnamefont
  {Girardeau}},\ }\href {\doibase 10.1103/PhysRevLett.97.100402} {\bibfield
  {journal} {\bibinfo  {journal} {Phys. Rev. Lett.}\ }\textbf {\bibinfo
  {volume} {97}},\ \bibinfo {pages} {100402} (\bibinfo {year}
  {2006})}\BibitemShut {NoStop}%
\bibitem [{\citenamefont {Batchelor}\ \emph {et~al.}(2006)\citenamefont
  {Batchelor}, \citenamefont {Guan},\ and\ \citenamefont
  {Oelkers}}]{Batchelor06}%
  \BibitemOpen
  \bibfield  {author} {\bibinfo {author} {\bibfnamefont {M.~T.}\ \bibnamefont
  {Batchelor}}, \bibinfo {author} {\bibfnamefont {X.-W.}\ \bibnamefont {Guan}},
  \ and\ \bibinfo {author} {\bibfnamefont {N.}~\bibnamefont {Oelkers}},\ }\href
  {\doibase 10.1103/PhysRevLett.96.210402} {\bibfield  {journal} {\bibinfo
  {journal} {Phys. Rev. Lett.}\ }\textbf {\bibinfo {volume} {96}},\ \bibinfo
  {pages} {210402} (\bibinfo {year} {2006})}\BibitemShut {NoStop}%
\bibitem [{\citenamefont {Cooper}\ \emph {et~al.}(1995)\citenamefont {Cooper},
  \citenamefont {Khare},\ and\ \citenamefont {Sukhatme}}]{Cooper95}%
  \BibitemOpen
  \bibfield  {author} {\bibinfo {author} {\bibfnamefont {F.}~\bibnamefont
  {Cooper}}, \bibinfo {author} {\bibfnamefont {A.}~\bibnamefont {Khare}}, \
  and\ \bibinfo {author} {\bibfnamefont {U.}~\bibnamefont {Sukhatme}},\ }\href
  {\doibase https://doi.org/10.1016/0370-1573(94)00080-M} {\bibfield  {journal}
  {\bibinfo  {journal} {Physics Reports}\ }\textbf {\bibinfo {volume} {251}},\
  \bibinfo {pages} {267} (\bibinfo {year} {1995})}\BibitemShut {NoStop}%
\bibitem [{Note1()}]{Note1}%
  \BibitemOpen
  \bibinfo {note} {This observation was put forward first by
  Polychronakos~\protect \citep {Polychronakos92}, using a heuristic locality
  argument. However, we emphasize the argument misses an important ingredient,
  the permutation invariance. In~\protect \citep {SM}, we give a formal proof
  inspired by the coordinate Bethe ansatz~\protect \citep
  {Gaudin14,SamajBajnok13}.}\BibitemShut {Stop}%
\bibitem [{\citenamefont {Beau}(2017)}]{Beau17note}%
  \BibitemOpen
  \bibfield  {author} {\bibinfo {author} {\bibfnamefont {M.}~\bibnamefont
  {Beau}},\ }\href@noop {} {\emph {\bibinfo {title} {unpublished}}}\ (\bibinfo
  {year} {2017})\BibitemShut {NoStop}%
\bibitem [{\citenamefont {Astrakharchik}\ \emph {et~al.}(2005)\citenamefont
  {Astrakharchik}, \citenamefont {Boronat}, \citenamefont {Casulleras},\ and\
  \citenamefont {Giorgini}}]{STG1}%
  \BibitemOpen
  \bibfield  {author} {\bibinfo {author} {\bibfnamefont {G.~E.}\ \bibnamefont
  {Astrakharchik}}, \bibinfo {author} {\bibfnamefont {J.}~\bibnamefont
  {Boronat}}, \bibinfo {author} {\bibfnamefont {J.}~\bibnamefont {Casulleras}},
  \ and\ \bibinfo {author} {\bibfnamefont {S.}~\bibnamefont {Giorgini}},\
  }\href {\doibase 10.1103/PhysRevLett.95.190407} {\bibfield  {journal}
  {\bibinfo  {journal} {Phys. Rev. Lett.}\ }\textbf {\bibinfo {volume} {95}},\
  \bibinfo {pages} {190407} (\bibinfo {year} {2005})}\BibitemShut {NoStop}%
\bibitem [{\citenamefont {Carleo}\ \emph {et~al.}(2017)\citenamefont {Carleo},
  \citenamefont {Cevolani}, \citenamefont {Sanchez-Palencia},\ and\
  \citenamefont {Holzmann}}]{Carleo17}%
  \BibitemOpen
  \bibfield  {author} {\bibinfo {author} {\bibfnamefont {G.}~\bibnamefont
  {Carleo}}, \bibinfo {author} {\bibfnamefont {L.}~\bibnamefont {Cevolani}},
  \bibinfo {author} {\bibfnamefont {L.}~\bibnamefont {Sanchez-Palencia}}, \
  and\ \bibinfo {author} {\bibfnamefont {M.}~\bibnamefont {Holzmann}},\ }\href
  {\doibase 10.1103/PhysRevX.7.031026} {\bibfield  {journal} {\bibinfo
  {journal} {Phys. Rev. X}\ }\textbf {\bibinfo {volume} {7}},\ \bibinfo {pages}
  {031026} (\bibinfo {year} {2017})}\BibitemShut {NoStop}%
\bibitem [{\citenamefont {Jain}\ and\ \citenamefont
  {Khare}(1999)}]{JainKhare99}%
  \BibitemOpen
  \bibfield  {author} {\bibinfo {author} {\bibfnamefont {S.~R.}\ \bibnamefont
  {Jain}}\ and\ \bibinfo {author} {\bibfnamefont {A.}~\bibnamefont {Khare}},\
  }\href {\doibase https://doi.org/10.1016/S0375-9601(99)00637-4} {\bibfield
  {journal} {\bibinfo  {journal} {Physics Letters A}\ }\textbf {\bibinfo
  {volume} {262}},\ \bibinfo {pages} {35 } (\bibinfo {year}
  {1999})}\BibitemShut {NoStop}%
\bibitem [{\citenamefont {Pittman}\ \emph {et~al.}(2017)\citenamefont
  {Pittman}, \citenamefont {Beau}, \citenamefont {Olshanii},\ and\
  \citenamefont {del Campo}}]{Pittman17}%
  \BibitemOpen
  \bibfield  {author} {\bibinfo {author} {\bibfnamefont {S.~M.}\ \bibnamefont
  {Pittman}}, \bibinfo {author} {\bibfnamefont {M.}~\bibnamefont {Beau}},
  \bibinfo {author} {\bibfnamefont {M.}~\bibnamefont {Olshanii}}, \ and\
  \bibinfo {author} {\bibfnamefont {A.}~\bibnamefont {del Campo}},\ }\href
  {\doibase 10.1103/PhysRevB.95.205135} {\bibfield  {journal} {\bibinfo
  {journal} {Phys. Rev. B}\ }\textbf {\bibinfo {volume} {95}},\ \bibinfo
  {pages} {205135} (\bibinfo {year} {2017})}\BibitemShut {NoStop}%
\bibitem [{\citenamefont {Erdelyi}(2007)}]{erdelyi2007highertranscendental}%
  \BibitemOpen
  \bibfield  {author} {\bibinfo {author} {\bibfnamefont {A.}~\bibnamefont
  {Erdelyi}},\ }\href@noop {} {\emph {\bibinfo {title} {Higher {{Transcendental
  Functions Volume}} 3}}},\ Vol.~\bibinfo {volume} {3}\ (\bibinfo  {publisher}
  {{Dover Publications}},\ \bibinfo {address} {{Mineola, N.Y.}},\ \bibinfo
  {year} {2007})\BibitemShut {NoStop}%
\end{thebibliography}%

\clearpage\newpage\setcounter{equation}{0} \setcounter{section}{0}
\setcounter{subsection}{0} 
\global\long\def\theequation{S\arabic{equation}}%
\onecolumngrid \setcounter{enumiv}{0} 

\setcounter{equation}{0} \setcounter{section}{0} \setcounter{subsection}{0} \renewcommand{\theequation}{S\arabic{equation}} \onecolumngrid \setcounter{enumiv}{0}
\begin{center}
\textbf{\large{}Supplemental Material for \\``One-Dimensional Quantum Systems with Ground-State of Jastrow Form Are Integrable''}{\large\par}
\par\end{center}

\let\addcontentsline\oldaddcontentsline     

\tableofcontents

\addtocontents{toc}{\protect\thispagestyle{empty}}
\pagenumbering{gobble}

\section{The exchange operator}

In this section, we give a rigorous definition of the exchange operator
used in the main text and derive its properties from first principles.

\subsection{Properties of the two-particle position exchange operator $M_{ij}$}

Consider the $N$-fold tensor-product of a single-particle Hilbert space $\mathcal{H}$, i.e., $\mathcal{H}^{\otimes N}$,
where we do not assume a particular exchange statistics for the particles. 
This means that the particles may be distinguishable or
not. The position exchange operator $M_{ij}$ acting on the spatial coordinates $x_{i}$
and $x_{j}$ on a many-particle state $\psi_{s}\in\mathcal{H}^{\otimes N}$ is defined as 
\begin{equation}
M_{ij}\psi_{s}(x_{1},\cdots x_{i},\,\cdots,\,x_{j},\,\cdots x_{N})=\psi_{s}(x_{1},\cdots x_{j},\,\cdots,\,x_{i},\,\cdots x_{N}),\,i\neq j,\label{eq:M-def}
\end{equation}
where $\psi_{s}(x_{1},\cdots x_{i},\,\cdots,\,x_{j},\,\cdots x_{N})$
is the wave function on $\mathcal{H}^{\otimes N}$ and $s$ denotes
the internal degrees of freedom, which can account for spin in identical
particles. By definition, we find 
\begin{equation}
M_{ji}=M_{ij}.\label{eq:M-sym}
\end{equation}
Furthermore, according to this definition, it follows that
\begin{equation}
M_{ij}^{2}\psi_{s}(x_{1},\cdots x_{i},\,\cdots,\,x_{j},\,\cdots x_{N})=\psi_{s}(x_{1},\cdots x_{i},\,\cdots,\,x_{j},\,\cdots x_{N}),
\end{equation}
which implies that $M_{ij}$ is idempotent, i.e., 
\begin{equation}
M_{ij}^{2}=\mathbb{I}.\label{eq:M2}
\end{equation}
The inner product on $\mathcal{H}^{\otimes N}$ is 
\begin{equation}
(\phi,\,\psi)\equiv\sum_{s}\int\prod_{k=1}^{N}dx_{k}\phi_{s}^{*}(x_{1},\cdots x_{i},\,\cdots,\,x_{j},\,\cdots x_{N})\psi_{s}(x_{1},\cdots x_{i},\,\cdots,\,x_{j},\,\cdots x_{N}).\label{eq:in-prod}
\end{equation}
The adjoint of $M_{ij}$ is defined as $(M_{ij}^{\dagger}\phi,\,\psi)\equiv(\phi,\,M_{ij}\psi)$.
Explicitly expanding the inner product according to Eq~(\ref{eq:in-prod}),
we arrive at
\begin{align}
 & \sum_{s}\int\prod_{k=1}^{N}dx_{k}[M_{ij}^{\dagger}\phi_{s}(x_{1},\cdots,\,x_{i},\,\cdots x_{j},\,\cdots x_{N})]^{*}\psi_{s}(x_{1},\cdots x_{i},\,\cdots,\,x_{j},\,\cdots x_{N})\nonumber \\
= & \sum_{s}\int\prod_{k=1}^{N}dx_{k}\phi_{s}^{*}(x_{1},\cdots,\,x_{i},\,\cdots x_{j},\,\cdots x_{N})\psi_{s}(x_{1},\cdots x_{j},\,\cdots,\,x_{i},\,\cdots x_{N})\nonumber \\
= & \sum_{s}\int\prod_{k=1}^{N}dx_{k}\phi_{s}^{*}(x_{1},\cdots,\,x_{j},\,\cdots x_{i},\,\cdots x_{N})M_{ij}\psi_{s}(x_{1},\cdots x_{i},\,\cdots,\,x_{j},\,\cdots x_{N}),
\end{align}
which indicates that 
\begin{equation}
M_{ij}^{\dagger}\phi_{s}(x_{1},\cdots x_{i},\,\cdots,\,x_{j},\,\cdots x_{N})=\phi_{s}(x_{1},\cdots x_{j},\,\cdots,\,x_{i},\,\cdots x_{N}).
\end{equation}
Comparing above equation with Eq.~(\ref{eq:M-def}), we conclude that $M_{ij}$ is a Hermitian operator, 
\begin{equation}
M_{ij}=M_{ij}^{\dagger}.
\end{equation}
We define operators or observables 
\begin{eqnarray}
A_{i} & \equiv&A(x_{i}),\\
A_{ij} & \equiv&A(x_{i},\,x_{j}),\\
A_{ijk} & \equiv&A(x_{i},\,x_{j},\,x_{k}),\\
&\vdots & \nonumber 
\end{eqnarray}
where the dependence on the internal degrees of freedom of $\mathcal{O}$
is suppressed. Then one can easily find that
\begin{equation}
(\phi,\,M_{ij}A_{k}\psi)=(M_{ij}^{\dagger}\phi,\,A_{k}\psi)=\sum_{s}\int\prod_{l=1}^{N}dx_{l}\phi_{\sigma}^{*}(x_{1},\cdots,\,x_{j},\,\cdots x_{i},\,\cdots x_{N})\mathcal{O}(x_{k})\psi_{s}(x_{1},\cdots x_{i},\,\cdots,\,x_{j},\,\cdots x_{N})
\end{equation}
For $k\neq i,\,j$,  
\begin{align}
 & \sum_{s}\int\prod_{l=1}^{N}dx_{l}\phi_{s}^{*}(x_{1},\cdots,\,x_{j},\,\cdots x_{i},\,\cdots x_{N})\mathcal{O}(x_{k})\psi_{s}(x_{1},\cdots x_{i},\,\cdots,\,x_{j},\,\cdots x_{N})\nonumber \\
= & \sum_{s}\int\prod_{l=1}^{N}dx_{l}\phi_{s}^{*}(x_{1},\cdots,\,x_{j},\,\cdots x_{i},\,\cdots x_{N})\mathcal{O}(x_{k})\psi_{s}(x_{1},\cdots x_{i},\,\cdots,\,x_{j},\,\cdots x_{N})\nonumber \\
= & \sum_{s}\int\prod_{l=1}^{N}dx_{l}\phi_{s}^{*}(x_{1},\cdots,\,x_{j},\,\cdots x_{i},\,\cdots x_{N})\mathcal{O}(x_{k})M_{ji}\psi_{s}(x_{1},\cdots x_{j},\,\cdots,\,x_{i},\,\cdots x_{N})\nonumber \\
= & (\phi,\,\mathcal{O}_{k}M_{ji})
\end{align}
We conclude that 
\begin{equation}
M_{ij}A_{k}=A_{k}M_{ji}=A_{k}M_{ij},\quad \text{for }\,i,\,j,\,k,\,\text{distinct.}\label{eq:MO1}
\end{equation}
If $k=i$, then we find 
\begin{align}
(\phi,\,M_{ij}A_{i}\psi) & =\sum_{s}\int\prod_{l=1}^{N}dx_{l}\phi_{s}^{*}(x_{1},\cdots,\,x_{j},\,\cdots x_{i},\,\cdots x_{N})\mathcal{O}(x_{i})\psi_{s}(x_{1},\cdots x_{i},\,\cdots,\,x_{j},\,\cdots x_{N})\nonumber \\
 & =\sum_{s}\int\prod_{l=1}^{N}dx_{l}\phi_{s}^{*}(x_{1},\cdots,\,x_{j},\,\cdots x_{i},\,\cdots x_{N})A(x_{i})M_{ji}\psi_{s}(x_{1},\cdots x_{j},\,\cdots,\,x_{i},\,\cdots x_{N})\nonumber \\
 & =\sum_{s}\int\prod_{l=1}^{N}dx_{l}\phi_{s}^{*}(x_{1},\cdots,\,x_{i},\,\cdots x_{j},\,\cdots x_{N})A(x_{j})M_{ij}\psi_{s}(x_{1},\cdots x_{i},\,\cdots,\,x_{j},\,\cdots x_{N})\nonumber \\
 & =(\phi,\,A_{j}M_{ij}\psi),
\end{align}
where we have used change of the dummy indices $x_{i}\to x_{j}$ and
$x_{j}\to x_{i}$ in the second last equation. We find 
\begin{equation}
M_{ij}A_{i}=A_{j}M_{ij}
\end{equation}
and similarly 
\begin{equation}
M_{ij}A_{j}=A_{i}M_{ij}.
\end{equation}
One can extend the above arguments to many-body operators $A_{ijk\cdots}$
with little effort. For example, for two-body operators $A_{kl}$,
one can find 
\begin{align}
M_{ij}A_{kl} & =A_{kl},\quad \text{for }\,i,\,j,\,k,\,l,\,\text{distinct}\\
M_{ij}A_{jk} & =A_{ik}M_{ij},\quad \text{for }\,i,\,j,\,k,\,\text{distinct}\\
M_{ij}A_{kj} & =A_{ki}M_{ij},\quad \text{for }\,i,\,j,\,k,\,\text{distinct}\\
M_{ij}A_{ij} & =A_{ji}M_{ij},\\
M_{ij}A_{ji} & =A_{ij}M_{ij}.\label{eq:MO2}
\end{align}

\subsection{Properties of the three-particle position exchange operators $M_{ijk}$}

The particle exchange operator with three indices is defined as 
\begin{equation}
M_{ijk}=M_{ij}M_{jk},\quad \text{for }\\,i,\,j,\,k,\,\text{distinct}.\label{eq:3perm-def}
\end{equation}
With the properties of the two-particle exchange operator, one can
easily show that $M_{ijk}$ is invariant under cyclic permutation,
i.e.,
\begin{equation}
M_{ijk}=M_{jki}=M_{kij}
\end{equation}
However, it is {\it not} fully symmetric in its indices. In particular, $M_{ijk}\neq M_{jik}$. Furthemore, for all distinct $i,\,j,\,k,\,l$,
\begin{equation}
M_{ijk}A_{l}=A_{l}M_{ijk}
\end{equation}
since $A_{l}$ commute with $M_{ij}$ and $M_{jk}$. Furthermore,
one can explicit check that 
\begin{align}
M_{ijk}A_{i} & =M_{ij}A_{i}M_{jk}=A_{j}M_{ijk},\label{eq:3perm-sym1}\\
M_{ijk}A_{j} & =M_{ij}A_{k}M_{jk}=A_{k}M_{ijk},\\
M_{ijk}A_{k} & =M_{ij}A_{j}M_{jk}=A_{i}M_{ijk}.\label{eq:3perm-sym3}
\end{align}

\section{\label{sec:Caveat-for-projection} Projection onto the bosonic and fermionic
subspaces}

In EOF, one may restrict the construction to the bosonic or fermionic subspace to simplify
the calculation or motivated on physical grounds. For example, it was shown by Polychronakos~\citep{Polychronakos92,Polychronakos06}
that if the prepotential is $V_{ij}=l/x_{ij}$, then 
\begin{equation}
\frac{1}{2}I_{2}=\frac{1}{2m}\sum_{i}\pi_{i}^{2}=\frac{1}{2m}\sum_{i}p_{i}^{2}+\frac{\hbar^{2}}{m}\sum_{i>j}\frac{l(l-M_{ij})}{(x_{i}-x_{j})^{2}}.\label{eq:half-I2}
\end{equation}
This Hamiltonian is still defined on $\mathcal{H}^{\otimes N}$. Recall
that the Hilbert space $\mathcal{H}_{\zeta}$ of identical particles
is a subspace of $\mathcal{H}^{\otimes N}$. We define the projection
operator on $\mathcal{H}^{\otimes N}$ to $\mathcal{H}_{\zeta}$ as
$\mathcal{P}_{\zeta}$. It can be readily checked that for any $\psi(x_{1},\,x_{2},\,\cdots,\,x_{N})\in\mathcal{H}^{\otimes N}$,
\begin{align}
M_{ij}\mathcal{P}_{\zeta}\psi(x_{1},\,x_{2},\,\cdots,\,x_{N}) & =\frac{1}{N!}\sum_{\sigma}(\pm1)^{\sigma}M_{ij}\psi(x_{\sigma_{1}},\dots\,x_{\sigma_{i}},\cdots x_{\sigma_{j}}\,\cdots,\,x_{\sigma_{N}})\nonumber \\
 & =\frac{1}{N!}\sum_{\sigma}(\pm1)^{\sigma+1}\psi(x_{\sigma_{1}},\dots\,x_{\sigma_{i}},\cdots x_{\sigma_{j}}\,\cdots,\,x_{\sigma_{N}})\nonumber \\
 & =\pm\mathcal{P}_{\zeta}\psi(x_{1},\,x_{2},\,\cdots,\,x_{N}),
\end{align}
\begin{align}
\mathcal{P}_{\zeta}M_{ij}\psi(x_{1},\,x_{2},\,\cdots,\,x_{N}) & =\mathcal{P}_{\zeta}\psi(x_{1},\,\cdots x_{j},\,\cdots x_{i},\cdots\,x_{N})\nonumber \\
 & =(\pm1)\mathcal{P}_{\zeta}\psi(x_{1},\,\cdots x_{i},\,\cdots x_{j},\cdots\,x_{N})\nonumber \\
 & =\frac{1}{N!}\sum_{\sigma}(\pm1)^{\sigma+1}\psi(x_{\sigma_{1}},\dots\,x_{\sigma_{i}},\cdots x_{\sigma_{j}}\,\cdots,\,x_{\sigma_{N}})\nonumber \\
 & =\pm\mathcal{P}_{\zeta}\psi(x_{1},\,x_{2},\,\cdots,\,x_{N}).
\end{align}
Thus, we conclude
\begin{equation}
[\mathcal{P}_{\zeta},\,M_{ij}]=0,
\end{equation}
\begin{equation}
\mathcal{P}_{\zeta}M_{ij}=M_{ij}\mathcal{P}_{\zeta}=\zeta\mathcal{P}_{\zeta}.
\end{equation}
Therefore, when projecting onto the bosonic or fermionic subspace, Eq.~(\ref{eq:half-I2})
becomes
\begin{equation}
\frac{1}{2}\mathcal{P}_{\zeta}I_{2}\mathcal{P}_{\zeta}=\frac{1}{2}\sum_{i}\pi_{i}^{2}=\mathcal{P}_{\zeta}H_{\text{CS}}\mathcal{P}_{\zeta},\label{eq:PHP}
\end{equation}
where 
\begin{equation}
H_{\text{CS}}=\frac{1}{2m}\sum_{i}p_{i}^{2}+\frac{\hbar^{2}}{m}\sum_{i>j}\frac{l(l\mp\zeta)}{(x_{i}-x_{j})^{2}}
\end{equation}
 is the rational Calogero model. The procedure in Eq.~(\ref{eq:PHP})
is effectively equivalent to replacing $M_{ij}=\zeta\mathbb{I}$. 

However, we warn the audience that when focusing on spinless bosons
or fermions, setting $M_{ij}=\zeta\mathbb{I}$ requires some caution.
For example, when calculating $[p_{i},\,\sum_{k\neq j}V_{jk}M_{jk}]$
with $i\neq j$, had one setting $M_{jk}=\zeta\mathbb{I}$ before
actually calculating the commutator, one would obtain 
\begin{equation}
[p_{i},\,\sum_{k\neq j}V_{jk}M_{jk}\mathcal{P}_{\zeta}]\overset{?}{=}\zeta\text{i}\hbar V_{ij}^{\prime}.
\end{equation}
On the other hand, a rigorous calculation following Eqs.~(\ref{eq:MO1}-\ref{eq:MO2})
shows that 
\begin{align}
[p_{i},\,\sum_{k\neq j}V_{jk}M_{jk}] & =[p_{j},\,V_{ji}M_{ji}]=[p_{i},\,V_{ji}]M_{ji}+V_{ji}[p_{i},\,M_{ji}]\nonumber \\
 & =\text{i}\hbar V_{ij}^{\prime}M_{ij}-V_{ij}(p_{i}-p_{j})M_{ij},\,(i\neq j)
\end{align}
Upon projecting this relation onto $\mathcal{H}_{\alpha}$ by setting $M_{jk}=\mathbb{I}$,
which yields 
\begin{equation}
[p_{i},\,\sum_{k\neq j}V_{jk}M_{jk}]\mathcal{P}_{\zeta}=\zeta\text{i}\hbar V_{ij}^{\prime}-\zeta V_{ij}(p_{i}-p_{j}).
\end{equation}
This example clearly shows that $[\mathcal{A},\,\mathcal{B}]\mathcal{P}_{\zeta}\neq[\mathcal{A},\,\mathcal{B}\mathcal{P}_{\zeta}]$
and in particular 
\begin{equation}
[\mathcal{A},\,\mathcal{B}\mathcal{P}_{\zeta}]=[\mathcal{A},\,\mathcal{B}]\mathcal{P}_{\zeta}+\mathcal{B}[\mathcal{A},\,\mathcal{P}_{\zeta}].
\end{equation}
The caveat we would like to give to the audience is that whenever
one would like to set $M_{ij}=\pm\mathbb{I}$, say, on physical grounds, one should
bear in mind that at the formal level a projection has been introduced, which has to be taken into account to work out the correct algebra. 

Finally, we mention the following interesting lemma for spinless indistinguishable
particles, thanks to the EOF and the projection operator:
\begin{lem}
\label{lem:super-symmetric} For any $k$-body operator $A_{i_{1}\cdots i_{k}}$
acting on states of $N$-spinless indistinguishable particles, $\mathcal{P}_{\zeta}A_{i_{1}\cdots i_{k}}\mathcal{P}_{\zeta}$
is ``super-symmetric'', i.e., 
\begin{equation}
\mathcal{P}_{\zeta}A_{i_{1}\cdots i_{k}}\mathcal{P}_{\zeta}=\mathcal{P}_{\zeta}A_{j_{1}\cdots j_{k}}\mathcal{P}_{\zeta},
\end{equation}
where $(j_{1},\,j_{2},\,\cdots,\,j_{k})$ is arbitrary $k$-tuple,
not necessarily a permutation of $(i_{1},\,\cdots i_{k})$
\end{lem}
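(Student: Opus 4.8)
The plan is to prove Lemma~\ref{lem:super-symmetric} by reducing the general $k$-body statement to the elementary exchange rules for $M_{ij}$ already established in Eqs.~(\ref{eq:MO1})--(\ref{eq:MO2}), combined with the intertwining relation $M_{ij}\mathcal{P}_{\zeta}=\mathcal{P}_{\zeta}M_{ij}=\zeta\mathcal{P}_{\zeta}$. The key observation is that it suffices to prove two special cases of the identity, from which the general case follows by composing elementary moves: (a) replacing one index $i_{r}$ by a fresh index $j$ not among $i_{1},\dots,i_{k}$; and (b) permuting the indices $i_{1},\dots,i_{k}$ among themselves. Since the target tuple $(j_{1},\dots,j_{k})$ is arbitrary (entries need not be distinct and need not be a permutation of the source), I would first handle the case where all $j_{r}$ are distinct by a sequence of moves of type (a) and (b), and then address coincidences among the $j_{r}$ separately.

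First I would treat move (a). Pick an index $j\notin\{i_{1},\dots,i_{k}\}$ and also $j\neq i_{r}$ is the index being replaced; choose any further index $l$ distinct from all of these (possible for $N$ large enough, and for small $N$ the statement is checked directly or vacuous). Then $M_{jl}$ commutes with every $A_{i_{s}}$ for $s\neq r$ by Eq.~(\ref{eq:MO1}) applied factor-by-factor to $A_{i_{1}\cdots i_{k}}$, and $M_{jl}A_{\cdots i_{r}\cdots}=A_{\cdots l\cdots}M_{jl}$ acting only on the $i_{r}$-slot. Hence, inserting $M_{jl}^{2}=\mathbb{I}$,
\begin{equation}
\mathcal{P}_{\zeta}A_{i_{1}\cdots i_{r}\cdots i_{k}}\mathcal{P}_{\zeta}=\mathcal{P}_{\zeta}M_{jl}\,A_{i_{1}\cdots i_{r}\cdots i_{k}}\,M_{jl}\mathcal{P}_{\zeta}\,\zeta^{-2}=\mathcal{P}_{\zeta}A_{i_{1}\cdots l\cdots i_{k}}\mathcal{P}_{\zeta},
\end{equation}
where the first equality uses $\mathcal{P}_{\zeta}M_{jl}=\zeta\mathcal{P}_{\zeta}$ and $M_{jl}\mathcal{P}_{\zeta}=\zeta\mathcal{P}_{\zeta}$. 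This shows the $i_{r}$-slot may be relabelled to any index $l$ not currently appearing. Move (b), permuting two existing indices $i_{r}\leftrightarrow i_{s}$, follows the same way by conjugating with $M_{i_{r}i_{s}}$ and using the two-body exchange rules of Eqs.~(\ref{eq:MO1})--(\ref{eq:MO2}) for how $M_{i_{r}i_{s}}$ passes through $A$. Composing: to reach any distinct tuple $(j_{1},\dots,j_{k})$, relabel the slots one at a time via (a), using a temporary fresh index whenever a desired target value currently collides with an occupied slot, then fix the order via (b).

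The main obstacle — and the step requiring the most care — is the case in which the target tuple has repeated entries, e.g.\ $j_{1}=j_{2}$, so that $A_{j_{1}j_{2}\cdots}$ is a genuinely lower-``arity'' operator obtained by diagonal restriction. Here one cannot get there by exchange conjugation alone, since $M_{ij}$ moves indices apart, not together. The resolution is to note that $\mathcal{P}_{\zeta}A_{i_{1}i_{2}\cdots i_{k}}\mathcal{P}_{\zeta}$ with $i_{1}\neq i_{2}$ already equals, by repeated use of move (a) composed with idempotency and the projector identity, an expression in which we are free to \emph{choose} a representative where slots $1$ and $2$ carry the \emph{same} index: conjugating by $M_{i_{2}i_{1}}$ sends $A(\dots,x_{i_{1}},x_{i_{2}},\dots)$ to $A(\dots,x_{i_{2}},x_{i_{1}},\dots)$, and combining with move (a) on each slot independently lets one collapse slot $2$'s label onto slot $1$'s without changing the sandwiched operator, precisely because each individual exchange move is an \emph{identity} between the two sandwiched operators, not merely a unitary equivalence. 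I would write this out carefully as: for any $j$ and any $l\neq j,i_{2}$, $\mathcal{P}_{\zeta}A_{\dots i_{2}\dots}\mathcal{P}_{\zeta}=\mathcal{P}_{\zeta}A_{\dots l\dots}\mathcal{P}_{\zeta}$, and choosing $l=j_{1}$ (even if $j_{1}$ already occupies slot $1$) is legitimate as long as $j_{1}\neq i_{2}$, which one arranges by first moving $i_{2}$ out of the way. Iterating over all slots yields the fully general statement. I would then add one line noting the $N$-small edge cases are immediate.
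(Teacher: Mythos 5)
Your moves (a) and (b) --- conjugating the sandwiched operator by exchange operators and absorbing them into the projectors via $M_{ij}\mathcal{P}_{\zeta}=\mathcal{P}_{\zeta}M_{ij}=\zeta\mathcal{P}_{\zeta}$ together with $M_{ij}^{2}=\mathbb{I}$ --- are exactly the mechanism of the paper's proof, which performs all $k$ relabelings in a single conjugation by $M_{i_{1}j_{1}}\cdots M_{i_{k}j_{k}}$; doing it one slot at a time with temporary fresh labels is fine, and in fact treats the case where the target tuple overlaps the source more explicitly than the paper does. One slip in move (a): with $j$ and $l$ both fresh, $M_{jl}$ commutes with \emph{every} factor of $A_{i_{1}\cdots i_{k}}$, so conjugating by it changes nothing; the operator you need is $M_{i_{r}l}$ (swap the label being replaced with the fresh target), for which $M_{i_{r}l}A_{\cdots i_{r}\cdots}M_{i_{r}l}=A_{\cdots l\cdots}$ provided neither $i_{r}$ nor $l$ occurs in any other slot. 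That is a repairable bookkeeping error.

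The genuine problem is your final paragraph on repeated target entries. The ``collapse'' is not a legal move: to relabel slot $2$ from $i_{2}$ to $l$ you must conjugate by $M_{i_{2}l}$, and if $l=j_{1}$ already occupies slot $1$, that same conjugation relabels slot $1$ from $j_{1}$ to $i_{2}$; the net effect is a swap of the two slots, never a coincidence of labels. More fundamentally, conjugation by exchange operators acts on the label tuple by a permutation of particle indices, so distinct labels remain distinct, and the identity you are trying to reach is actually false in that generality: for $N=2$ bosons and $A(x,y)=xy$, one has $\mathcal{P}_{+}A_{12}\mathcal{P}_{+}=x_{1}x_{2}\mathcal{P}_{+}$, whereas $\mathcal{P}_{+}A_{11}\mathcal{P}_{+}=\mathcal{P}_{+}x_{1}^{2}\mathcal{P}_{+}$ acts on symmetric states as multiplication by $\tfrac{1}{2}\left(x_{1}^{2}+x_{2}^{2}\right)\neq x_{1}x_{2}$. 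The lemma's ``arbitrary $k$-tuple'' must therefore be read as a tuple of $k$ \emph{distinct} particle labels, merely not required to be a permutation of (or disjoint from) $(i_{1},\dots,i_{k})$; that is all the paper's own proof establishes and all that its applications, such as $\mathcal{P}_{\zeta}\pi_{i}^{n}\mathcal{P}_{\zeta}=\mathcal{P}_{\zeta}\pi_{j}^{n}\mathcal{P}_{\zeta}$, require. Drop the repeated-index argument, fix the exchange operator in move (a), and the remainder of your proof is sound and essentially coincides with the paper's.
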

\begin{proof}
The proof is to take advantage of Eq.~(\ref{eq:M2}) and Eq.~(\ref{eq:MP-PM-P}).
The intuition for the proof can be easily seen from the case where indices
$(j_{1},\,j_{2},\,\cdots,\,j_{k})$ are distinct:
\begin{align}
\mathcal{P}_{\zeta}A_{i_{1}\cdots i_{k}}\mathcal{P}_{\zeta} & =\mathcal{P}_{\zeta}M_{i_{1}j_{1}}\cdots M_{i_{k}j_{k}}M_{i_{k}j_{k}}\cdots M_{i_{1}j_{1}}A_{i_{1}\cdots i_{k}}\mathcal{P}_{\zeta}\nonumber \\
 & =\mathcal{P}_{\zeta}M_{i_{1}j_{1}}\cdots M_{i_{k}j_{k}}A_{j_{1}\cdots j_{k}}M_{i_{k}j_{k}}M_{i_{k}j_{k}}M_{i_{1}j_{1}}\mathcal{P}_{\zeta}\nonumber \\
 & =\mathcal{P}_{\zeta}\zeta^{k}A_{j_{1}\cdots j_{k}}\zeta^{k}\mathcal{P}_{\zeta}\nonumber \\
 & =\mathcal{P}_{\zeta}A_{j_{1}\cdots j_{k}}\mathcal{P}_{\zeta}.
\end{align}
With the above intuition, one can easily prove the general case without
any difficulty.
\end{proof}

\section{\label{sec:Derivation-of-EOFH0}Derivation of Eq.~(\ref{eq:EOFH})}

Given the generalized
momenta 
\[
\pi_{i}=p_{i}+\text{i}\sum_{j\neq i}V_{ij}M_{ij},
\]
the EOF Hamiltonian is defined as 
\begin{eqnarray}
\frac{1}{2m}I_{2} & = & \frac{1}{2m}\sum_{i}\pi_{i}^{2}\nonumber \\
 & = & \frac{1}{2m}\sum_{i}\left(p_{i}+\text{i}\sum_{j\neq i}V_{ij}M_{ij}\right)^{2}\nonumber \\
 & = & \frac{1}{2m}\sum_{i}\left[p_{i}^{2}+\text{i}\sum_{j\neq i}(p_{i}V_{ij}M_{ij}+V_{ij}M_{ij}p_{i})-\sum_{j,\,k\neq i}V_{ij}M_{ij}V_{ik}M_{ik}\right],\label{eq:I2-explicit}
\end{eqnarray}
which is the sum of a purely kinetic one-body term, a two-body term,
and a three-body term.We next use the fact the two-body term can be
rewritten making use of $\ensuremath{[p_{i},V_{ij}]=-\text{i}\hbar,\,\partial_{i}V_{ij}=-\text{i}\hbar V_{ij}'}$
together with $\ensuremath{V_{ij}M_{ij}p_{i}=V_{ij}p_{j}M_{ij}}$.

In addition, we note that the third term on r.h.s. of Eq.~(\ref{eq:I2-explicit})
with $j=k$ becomes a two-body term $\ensuremath{-V_{ij}M_{ij}V_{ij}M_{ij}=-V_{ij}V_{ji}M_{ij}^{2}=+V_{ij}^{2}}$,
using the fact that $\ensuremath{V_{ij}}$ is antisymmetric in its
indices. Thus 
\begin{align}
\frac{1}{2m}I_{2} & =\frac{1}{2m}\sum_{i}\left[p_{i}^{2}+\sum_{j\neq i}(\hbar V_{ij}'M_{ij}+\text{i}V_{ij}(p_{i}+p_{j})M_{ij}+V_{ij}^{2})-\sum_{i\neq j\neq k\neq i}V_{ij}M_{ij}V_{ik}M_{ik}\right]\nonumber \\
 & =\frac{1}{2m}\sum_{i}\left[p_{i}^{2}+\sum_{j\neq i}(\hbar V_{ij}'M_{ij}+V_{ij}^{2})-\sum_{i\neq j\neq k\neq i}V_{ij}V_{jk}M_{ij}M_{ik}\right]\nonumber \\
 & =\frac{1}{2m}\sum_{i}\left[p_{i}^{2}+\sum_{j\neq i}(\hbar V_{ij}'M_{ij}+V_{ij}^{2})-\sum_{i\neq j\neq k\neq i}V_{ij}V_{ki}M_{ijk}\right],\label{eq:I2-2m}
\end{align}
where in the last step we have swapped the indices $i$ and $j$ and
used the fact that $V_{ij}$ is antisymmetric in its indices, and that $M_{ijk}$
is the three-particle exchange operator defined in Eq.~(\ref{eq:3perm-def}).
Taking advantage of the invariance of $M_{ijk}$ under the cyclic
permutation of $i$, $j$, $k$, we can further symmetrize the last
term on the r.h.s. of Eq.~(\ref{eq:I2-2m}) as
\begin{equation}
\sum_{i\neq j\neq k\neq i}V_{ij}V_{ki}M_{ijk}=\frac{1}{3}\sum_{j\neq k\neq i}(V_{ij}V_{jk}+V_{jk}V_{ki}+V_{ki}V_{ij})M_{ijk}=\frac{1}{3}\sum_{j\neq k\neq i}V_{ijk}M_{ijk},
\end{equation}
where $V_{ijk}=V_{ij}V_{jk}+V_{jk}V_{ki}+V_{ki}V_{ij}$. Upon projecting
to the bosonic or fermionic subspace, we find, 

\begin{equation}
\frac{1}{2m}\mathcal{P}_{\zeta}I_{2}\mathcal{P}_{\zeta}=\mathcal{P}_{\zeta}\left(\frac{1}{2m}\sum_{i}p_{i}^{2}+\frac{1}{2m}\sum_{i\neq j}(\zeta\hbar V_{ij}'+V_{ij}^{2})-\frac{1}{6m}\sum_{j\neq k\neq i}V_{ijk}\right)\mathcal{P}_{\zeta}
\end{equation}
Since $V_{ij}^{\prime}$ and $V_{ij}^{2}$ are symmetric in the indices
$i$ and $j$ and $V_{ijk}$ is fully symmetric in permutation of
any pair of the indices, the sums in above equation can be further
rewritten as 
\begin{equation}
\frac{1}{2m}\mathcal{P}_{\zeta}I_{2}\mathcal{P}_{\zeta}=\mathcal{P}_{\zeta}\left(\frac{1}{2m}\sum_{i}p_{i}^{2}+\frac{1}{m}\sum_{i<j}(\zeta\hbar V_{ij}'+V_{ij}^{2})-\frac{1}{m}\sum_{i<j<k}V_{ijk}\right)\mathcal{P}_{\zeta}.
\end{equation}
Therefore, we find Eq.~(\ref{eq:EOFH}).

\section{Results related to integrability }

In this section, we discuss some results that are used in the proof
of integrability discussed in the main text.

\subsection{Proof of Eq.~(\ref{eq:PO-comm})}

For any permutation-invariant observable $\mathcal{O}(x_{1},\,x_{2},\,\cdots,\,x_{N})$,
\begin{align}
\mathcal{P}_{\zeta}\mathcal{O}(x_{1},\,x_{2},\,\cdots,\,x_{N})\psi(x_{1},\,x_{2},\,\cdots,\,x_{N}) & =\frac{1}{N!}\sum_{\sigma}\mathcal{O}(x_{\sigma_{1}},\,x_{\sigma_{2}},\,\cdots,\,x_{\sigma_{N}})\psi(x_{\sigma_{1}},\,x_{\sigma_{2}},\,\cdots,\,x_{\sigma_{N}})\nonumber \\
 & =\mathcal{O}(x_{1},\,x_{2},\,\cdots,\,x_{N})\mathcal{P}_{\zeta}\psi(x_{1},\,x_{2},\,\cdots,\,x_{N}),
\end{align}
for any $\psi(x_{1},\,x_{2},\,\cdots,\,x_{N})\in\mathcal{H}^{\otimes N}$.
Therefore, we conclude that
\begin{equation}
[\mathcal{P}_{\zeta},\,\mathcal{O}]=0.
\end{equation}
. 

\subsection{Proof of the block-diagonal structure of permutation-invariant observables.}
\begin{prop*}
A permutation-invariant observable is block diagonal on $\mathcal{H}_{\zeta}$
and $\mathcal{H}_{\zeta}^{\perp}$, where $\mathcal{H}_{\zeta}^{\perp}$
is the orthogonal complement of $\mathcal{H}_{\zeta}$ on $\mathcal{H}^{\otimes N}$.
\end{prop*}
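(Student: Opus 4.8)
The plan is to reduce block-diagonality to the commutation relation $[\mathcal{P}_{\zeta},\,\mathcal{O}]=0$ established in the previous subsection, combined with the fact that $\mathcal{P}_{\zeta}$ is an \emph{orthogonal} projector: it is idempotent and Hermitian, which follows from its definition together with $M_{ij}^{\dagger}=M_{ij}$ and $M_{ij}^{2}=\mathbb{I}$ (equivalently, it is the group average of unitaries on $\mathcal{H}^{\otimes N}$). Consequently $\mathcal{P}_{\zeta}$ projects orthogonally onto $\mathcal{H}_{\zeta}$, the complementary projector $\mathbb{I}-\mathcal{P}_{\zeta}$ projects onto $\mathcal{H}_{\zeta}^{\perp}$, and one has the orthogonal decomposition $\mathcal{H}^{\otimes N}=\mathcal{H}_{\zeta}\oplus\mathcal{H}_{\zeta}^{\perp}$.

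First I would show that $\mathcal{H}_{\zeta}$ is invariant under $\mathcal{O}$. Taking $\psi\in\mathcal{H}_{\zeta}$, so that $\mathcal{P}_{\zeta}\psi=\psi$, we get $\mathcal{P}_{\zeta}(\mathcal{O}\psi)=\mathcal{O}\mathcal{P}_{\zeta}\psi=\mathcal{O}\psi$, which says that $\mathcal{O}\psi$ is fixed by $\mathcal{P}_{\zeta}$, hence $\mathcal{O}\psi\in\mathcal{H}_{\zeta}$. The same argument applied to $\mathbb{I}-\mathcal{P}_{\zeta}$, which also commutes with $\mathcal{O}$, gives invariance of $\mathcal{H}_{\zeta}^{\perp}$: for $\psi\in\mathcal{H}_{\zeta}^{\perp}$ one has $(\mathbb{I}-\mathcal{P}_{\zeta})\psi=\psi$, so $(\mathbb{I}-\mathcal{P}_{\zeta})\mathcal{O}\psi=\mathcal{O}(\mathbb{I}-\mathcal{P}_{\zeta})\psi=\mathcal{O}\psi$, whence $\mathcal{O}\psi\in\mathcal{H}_{\zeta}^{\perp}$.

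Having both summands of the orthogonal direct sum invariant, I would conclude that $\mathcal{O}$ is block diagonal, i.e. $\mathcal{O}=\mathcal{P}_{\zeta}\mathcal{O}\mathcal{P}_{\zeta}\oplus(\mathbb{I}-\mathcal{P}_{\zeta})\mathcal{O}(\mathbb{I}-\mathcal{P}_{\zeta})$ with both off-diagonal blocks vanishing; concretely $\mathcal{P}_{\zeta}\mathcal{O}(\mathbb{I}-\mathcal{P}_{\zeta})=\mathcal{P}_{\zeta}\mathcal{O}-\mathcal{P}_{\zeta}\mathcal{O}\mathcal{P}_{\zeta}=\mathcal{O}\mathcal{P}_{\zeta}-\mathcal{P}_{\zeta}\mathcal{O}\mathcal{P}_{\zeta}=0$ by $[\mathcal{P}_{\zeta},\,\mathcal{O}]=0$, and likewise $(\mathbb{I}-\mathcal{P}_{\zeta})\mathcal{O}\mathcal{P}_{\zeta}=0$. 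There is essentially no obstacle in this argument; the only point deserving a word of care is that $\mathcal{P}_{\zeta}$ must be a \emph{self-adjoint} projector, so that ``block diagonal'' genuinely refers to the orthogonal decomposition $\mathcal{H}_{\zeta}\oplus\mathcal{H}_{\zeta}^{\perp}$ rather than merely to a pair of complementary invariant subspaces — and this is guaranteed by the explicit (anti)symmetrizing form of $\mathcal{P}_{\zeta}$ recalled above.
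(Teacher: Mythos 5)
Your proposal is correct and follows essentially the same route as the paper: both arguments use $[\mathcal{P}_{\zeta},\,\mathcal{O}]=0$ together with $\mathcal{P}_{\zeta}\mathcal{P}_{\zeta}^{\perp}=0$ (in your notation $\mathcal{P}_{\zeta}(\mathbb{I}-\mathcal{P}_{\zeta})=0$) to kill the off-diagonal blocks and conclude $\mathcal{O}=\mathcal{P}_{\zeta}\mathcal{O}\mathcal{P}_{\zeta}\oplus\mathcal{P}_{\zeta}^{\perp}\mathcal{O}\mathcal{P}_{\zeta}^{\perp}$. Your added remarks on the Hermiticity and idempotency of $\mathcal{P}_{\zeta}$ and the invariance of the two subspaces are a fine (if slightly more verbose) elaboration of the same argument.
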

\begin{proof}
We denote the projector onto $\mathcal{H}_{\zeta}^{\perp}$ as $\mathcal{P}_{\zeta}^{\perp}$.
Then, for a permutation-invariant observable $\mathcal{O}$, according
to Eq.~(\ref{eq:PO-comm}), it can be easily seen 
\begin{align}
\mathcal{P}_{\zeta}\mathcal{O}\mathcal{P}_{\zeta}^{\perp} & =\mathcal{O}\mathcal{P}_{\zeta}\mathcal{P}_{\zeta}^{\perp}=0,\\
\mathcal{P}_{\zeta}^{\perp}\mathcal{O}\mathcal{P}_{\zeta} & =\mathcal{P}_{\zeta}\mathcal{P}_{\zeta}^{\perp}\mathcal{O}=0,
\end{align}
whence it follows that 
\begin{align}
\mathcal{O}=\mathcal{P}_{\zeta}\mathcal{O}\mathcal{P}_{\zeta}\oplus\mathcal{P}_{\zeta}^{\perp}\mathcal{O}\mathcal{P}_{\zeta}^{\perp},
\end{align}
 which concludes the proof. 
\end{proof}

\subsection{Proof of Eq.~(\ref{eq:vanishing-theorem})}

Eq.~(\ref{eq:vanishing-theorem}) is equivalent to the following
proposition:

\begin{prop*}
If a permutation-invariant and local observable $\mathcal{O}$ vanishes
on $\mathcal{H}_{\zeta}$, then it must vanish on the full Hilbert
space $\mathcal{H}^{\otimes N}$.
\end{prop*}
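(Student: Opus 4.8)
The plan is to prove the equivalent assertion directly: if $\mathcal{O}$ is permutation-invariant, local, and annihilates every state in $\mathcal{H}_\zeta$, then $\mathcal{O}\psi=0$ for all $\psi\in\mathcal{H}^{\otimes N}$. The guiding idea, in the spirit of the coordinate Bethe ansatz, is that on the open region of configuration space where all coordinates are pairwise distinct, the $\zeta$-(anti)symmetrization of a state supported near a generic point coincides, up to the overall factor $1/N!$, with the state itself; hence knowing $\mathcal{O}$ on $\mathcal{H}_\zeta$ is \emph{locally} the same as knowing it on all of $\mathcal{H}^{\otimes N}$, and \emph{locality} then upgrades a local statement to a global one.

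First I would fix $\psi\in\mathcal{H}^{\otimes N}$ and a point $\mathbf{x}_0$ with all coordinates distinct, and choose a smooth cutoff $\chi$ equal to $1$ on a small ball around $\mathbf{x}_0$ and supported in a slightly larger ball $B$ whose $N!$ permuted images $\sigma B$ are pairwise disjoint --- possible precisely because $\mathbf{x}_0$ avoids all coincidence hyperplanes and each $\sigma$ acts as an isometry of $\mathbb{R}^N$. For $\mathbf{x}\in B$ only the identity term survives in $\mathcal{P}_\zeta(\chi\psi)$, so $\mathcal{P}_\zeta(\chi\psi)=\frac{1}{N!}\chi\psi$ throughout $B$. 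Since $\mathcal{P}_\zeta(\chi\psi)\in\mathcal{H}_\zeta$, the hypothesis gives $\mathcal{O}\,\mathcal{P}_\zeta(\chi\psi)=0$ identically; evaluating at $\mathbf{x}_0$ and using that $\mathcal{O}$ is local --- the value of $\mathcal{O}\phi$ at $\mathbf{x}_0$ depends only on the germ of $\phi$ at $\mathbf{x}_0$, where $\chi\equiv1$ so that $\mathcal{P}_\zeta(\chi\psi)$ agrees with $\frac{1}{N!}\psi$ --- yields $\frac{1}{N!}(\mathcal{O}\psi)(\mathbf{x}_0)=0$. Thus $\mathcal{O}\psi$ vanishes on the dense, full-measure set of non-coincident configurations, for every $\psi$; feeding in states realizing arbitrary Taylor data at $\mathbf{x}_0$ then kills every coefficient of $\mathcal{O}$ there, so $\mathcal{O}=0$ as an $L^2$ operator (and pointwise wherever its coefficients are continuous), which is Eq.~(\ref{eq:vanishing-theorem}).

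I expect two points to need care. The easier one is the precise meaning of \emph{locality}: $\mathcal{O}$ should be a finite-order differential operator with coefficients regular off the coincidence set, and this hypothesis is genuinely used --- the nonlocal operator $\mathbb{I}-N!\,\mathcal{P}_\zeta$ annihilates $\mathcal{H}_\zeta$ without vanishing, which is exactly why a bare heuristic locality argument is incomplete. The main obstacle is the coincidence set $\{x_i=x_j\}$ itself: the cutoff trick fails there, and for the PHJ/EOF models of interest the coefficients of $\mathcal{O}$ (built from $V_{ij}$, $f'_{ij}/f_{ij}$, $\delta(x_{ij})$, $\ldots$) may be singular on it, so the argument only gives $\mathcal{O}=0$ off a measure-zero set. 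That is nonetheless exactly what is needed downstream: combined with \emph{permutation invariance} and the block-diagonal property Eq.~(\ref{eq:PO-comm}), it lets one promote $[\mathcal{P}_\zeta I_n\mathcal{P}_\zeta,\,\mathcal{P}_\zeta I_m\mathcal{P}_\zeta]=0$ to the genuine operator identity $[I_n,I_m]=0$ (and likewise for the $\tilde I_n$), which is the statement of integrability; permutation invariance is precisely the ingredient missing from the original heuristic argument.
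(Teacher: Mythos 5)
Your argument is correct for the class of operators you implicitly consider (finite-order differential operators whose action at a point depends only on the germ of the state there), but it is a genuinely different construction from the paper's. The paper follows the coordinate-Bethe-ansatz route: from an arbitrary $\psi$ it builds $\phi=\sum_{\sigma}\zeta^{\sigma}\,\psi(x_{\sigma_{1}},\dots,x_{\sigma_{N}})\,\theta_{\mathrm{H}}(x_{\sigma_{N}}-x_{\sigma_{N-1}})\cdots\theta_{\mathrm{H}}(x_{\sigma_{2}}-x_{\sigma_{1}})\in\mathcal{H}_{\zeta}$, which coincides with (a relabeling of) $\psi$ on each open ordered sector; $\mathcal{O}\phi=0$ then yields $\mathcal{O}\psi=0$ sector by sector, permutation invariance being used to relabel dummy variables, and locality is invoked once more to propagate the vanishing to the coincidence hyperplanes. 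You instead localize with a smooth cutoff at a generic configuration, use disjointness of the permuted supports to identify $\mathcal{P}_{\zeta}(\chi\psi)$ with $\tfrac{1}{N!}\chi\psi$ near that point, and let germ-locality finish; this avoids the Heaviside bookkeeping and makes explicit that, with this strong notion of locality, permutation invariance of $\mathcal{O}$ is not actually needed for the pointwise statement (it matters in the paper chiefly because their weaker definition of ``local'' --- finite order of derivatives --- would also admit operators containing exchange pieces such as $[\pi_{i},\pi_{j}]$, to which your germ argument simply does not apply). What you give up is the boundary: the paper additionally argues, by a continuity-type use of locality, that $\mathcal{O}\psi$ vanishes on the coincidence set, whereas you concede only almost-everywhere vanishing; your caveat is honest (and indeed unavoidable once coefficients like $\delta(x_{ij})$ are admitted), but note that it weakens the downstream promotion of $\mathcal{P}_{\zeta}[\mathcal{O}_{n},\mathcal{O}_{m}]\mathcal{P}_{\zeta}=0$ to $[\mathcal{O}_{n},\mathcal{O}_{m}]=0$ to an identity away from coincidences unless one assumes regularity of the coefficients there, which is what the paper's boundary step implicitly supplies. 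One small slip in your aside: with the paper's normalized projector, $\mathbb{I}-N!\,\mathcal{P}_{\zeta}$ does not annihilate $\mathcal{H}_{\zeta}$ (it acts as $(1-N!)\mathbb{I}$ there); the nonlocal counterexample you want is $\mathbb{I}-\mathcal{P}_{\zeta}$.
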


\begin{proof}
Assuming $\exists\psi(x_{1},\,x_{2},\,\cdots,\,x_{N})\in\mathcal{H}^{\otimes N}$,
which is not necessarily symmetric or anti-symmetric.
We borrow inspiration from coordinate Bethe ansatz~\citep{Gaudin14,SamajBajnok13}
and construct another symmetric or anti-symmetric wave function $\phi(x_{1},\,x_{2},\,\cdots,\,x_{N})\in\mathcal{H}_{\zeta}$
\begin{equation}
\phi(x_{1},\,x_{2},\,\cdots,\,x_{N})\equiv\sum_{\sigma}\zeta^{\sigma}\psi(x_{\sigma_{1}},\,x_{\sigma_{2}},\,\cdots x_{x_{\sigma_{N}}})\theta_{\text{H}}(x_{\sigma_{N}}-x_{\sigma_{N-1}})\cdots\theta_{\text{H}}(x_{\sigma_{3}}-x_{\sigma_{2}})\theta_{\text{H}}(x_{\sigma_{2}}-x_{\sigma_{1}})
\end{equation}
where $\sigma$ is a permutation of the tuple $(1,\,2,\,\cdots,\,N)$
and $\theta_{\text{H}}(x)$ is the Heaviside function. It is worth
to note that $\phi$ is not $\mathcal{P}_{\zeta}\psi$. For example,
if $\psi\in\mathcal{H}_{\zeta}^{\perp}$, then $\mathcal{P}_{\zeta}\psi=0$
while $\phi\neq0$ as long as $\psi\neq0$. Since $\mathcal{O}$ vanishes
on $\mathcal{H}_{\zeta}$ therefore 
\begin{equation}
\mathcal{O}(x_{1},\,x_{2},\,\cdots,\,x_{N})\phi(x_{1},\,x_{2},\,\cdots,\,x_{N})=0
\end{equation}
which implies
\begin{equation}
\mathcal{O}(x_{1},\,x_{2},\,\cdots,\,x_{N})\psi(x_{\sigma_{1}},\,x_{\sigma_{2}},\,\cdots x_{\sigma_{N}})=0\label{eq:O-psi-sig}
\end{equation}
for $x_{\sigma_{1}}<x_{\sigma_{2}}<\cdots<x_{\sigma_{N}}$. Eq.~(\ref{eq:O-psi-sig})
implies that 
\begin{equation}
\mathcal{O}(x_{1},\,x_{2},\,\cdots,\,x_{N})\psi(x_{1},\,x_{2},\,\cdots,\,x_{N})=0\label{eq:O-psi}
\end{equation}
for $x_{1}<x_{2}<\cdots<x_{N}$. However, our goal is to show it holds
in all the regions. To reach this goal, let us take advantage of permutation
invariance. Note that r.h.s. of Eq.~(\ref{eq:O-psi-sig}) implies
the $x_{i}$'s on l.h.s. are dummy indices. Therefore, Eq.~(\ref{eq:O-psi-sig})
is equivalence as 
\begin{equation}
\mathcal{O}(x_{\sigma_{1}^{-1}},\,x_{\sigma_{2}^{-1}},\,\cdots,\,x_{\sigma_{N}^{-1}})\psi(x_{1},\,x_{2},\,\cdots x_{N})=\mathcal{O}(x_{1},\,x_{2},\,\cdots,\,x_{N})\psi(x_{1},\,x_{2},\,\cdots x_{N})=0
\end{equation}
where $\sigma^{-1}$ is the inverse permutation of $\sigma$. So far,
we have shown that Eq.~(\ref{eq:O-psi}) holds on \textit{all} the allowed regions  of $(x_{1},\,x_{2},\,\cdots x_{N})$.
On the boundary where at least two of the coordinates coincide, locality
of $\mathcal{O}$ implies that the value of $\mathcal{O}\psi$ on
the boundary is full determined by the its value on the neighborhood
outside the boundary, where $\mathcal{O}\psi$ vanishes. Thus, $\mathcal{O}\psi$
must also vanishes on the boundary.
\end{proof}

It can be shown that~\citep{Polychronakos92} 
\begin{equation}
[\pi_{i},\,\pi_{j}]=\sum_{k\neq i,\,j}V_{ijk}(M_{ijk}-M_{jik}).
\end{equation}
Thus $(M_{ijk}-M_{jik})\mathcal{P}_{\zeta}=0$ and we obtain
\begin{equation}
[\pi_{i},\,\pi_{j}]\mathcal{P}_{\zeta}=0.
\end{equation}
Furthermore $[\pi_{i},\,\pi_{j}]$ is local since they only involve
at most second derivative of the coordinate. However, it does not
vanish on the whole Hilbert space $\mathcal{H}^{\otimes N}$ because
$[\pi_{i},\,\pi_{j}]$ is not permutation-invariant. So here we can
see for a local, but not permutation-invariant operator, even it vanishes
on $\mathcal{H}_{\zeta}$, it may not vanish on $\mathcal{H}^{\otimes N}$.

\subsection{Effective $1$-body operator}

The operator
\begin{equation}
W_{i}=\text{i}\sum_{k\neq i}V_{ik}M_{ik}
\end{equation}
acts an effectively $1$-body potential, in the sense that it satisfies
the exchange rule for $1$-body operators, even if it involves $N$
degrees of freedom! Indeed,
\begin{align}
M_{ij}W_{j} & =\text{i}\sum_{k\neq j}M_{ij}V_{jk}M_{jk}=\text{i}\sum_{k\neq i,\,j}M_{ij}V_{jk}M_{jk}+\text{i}M_{ij}V_{ji}M_{ji}\nonumber \\
 & =\text{i}\sum_{k\neq i,\,j}V_{ik}M_{ik}M_{ij}+\text{i}V_{ij}M_{ij}M_{ij}\nonumber \\
 & =\text{i}\sum_{k\neq i}V_{ik}M_{ik}M_{ij}=W_{i}M_{ij}.\label{eq:effective1-pi1}
\end{align}
Further, 
\begin{align}
M_{ij}W_{k} & =\text{i}\sum_{l\neq k}M_{ij}V_{kl}M_{kl}=\text{i}\sum_{l\neq k,\,i,\,j}M_{ij}V_{kl}M_{kl}+\text{i}M_{ij}V_{ki}M_{ki}+\text{i}M_{ij}V_{ki}M_{ki}\nonumber \\
 & =\text{i}\sum_{l\neq k,\,i,\,j}V_{kl}M_{kl}M_{ij}+\text{i}V_{ki}M_{ki}M_{ij}+\text{i}V_{kj}M_{kj}M_{ij}\nonumber \\
 & =\text{i}\sum_{l\neq k}V_{kl}M_{kl}M_{ij}=W_{k}M_{ij}.\label{eq:effective1-pi2}
\end{align}
Therefore, $\pi_{i}$, $a_{i}$, $a_{i}^{\dagger}$ and $h_{i}$ act all
as effectively $1$-body operators, i.e., they satisfy the exchange
rules of a $1$-body operator.

{
\section{Guide to discover new integrable systems}

The systematic investigation of models in the PHJ family~\cite{CalogeroMarchioro75,Gambardella75,delcampo20,BeauDC21}
has been limited to date given that the resulting models were expected
to be, at best, quasi exactly solvable, i.e., models in which only
part of the spectrum can be determined. Our work establishes the equivalence
between the PHJ and EOF family and proves the integrability of the
PHJ models by identifying the corresponding integrals of motion. The
identification of new integrable models in this family is straightforward.
It suffices to choose a prepotential $V(x)$ or pair function $f(x)$,
compute their derivatives, and use Eq.~(\ref{eq:EOFH}) or Eq.~(\ref{eq:PHJH}) in the main text to determine the Hamiltonian of the model in the
real line. Similarly, Eq.~(\ref{eq:EOFH-embed}) or Eq.~(\ref{eq:PHJH-embed})
in the main text readily gives the Hamiltonian of the model in the
presence of an external potential. For the convenience of the reader,
we shall give the general integrable Hamiltonian in the EOF and PHJ
family explicitly and then provide a user guide to discover new integrable
systems. The general integrable Hamiltonian reads
\begin{equation}
H=\sum_{i}\left(\frac{p_{i}^{2}}{2m}+U_{i}\right)+\frac{1}{m}\left[\sum_{i<j}\left(\zeta\hbar V'_{ij}+V_{ij}^{2}-\zeta\sqrt{2m}V_{ij}[W_{i}-W_{j}]\right)-\sum_{i<j<k}V_{ijk}\right],\label{eq:IntegrableH}
\end{equation}
or in term of the Jastrow wave function it reads
\begin{equation}
H=\sum_{i}\left(\frac{p_{i}^{2}}{2m}+U_{i}\right)+\frac{\hbar^{2}}{m}\left[\sum_{i<j}\left(\frac{f_{ij}''}{f_{ij}}+(v_{i}'-v_{j}')\frac{f'_{ij}}{f_{ij}}\right)+\sum_{i<j<k}\left(\frac{f_{ij}'f_{ik}'}{f_{ij}f_{ik}}-\frac{f_{ij}'f_{jk}'}{f_{ij}f_{jk}}+\frac{f_{ik}'f_{jk}'}{f_{ik}f_{jk}}\right)\right],\label{eq:IntegrableH-f}
\end{equation}
where $V_{ij}=V(x_{i}-x_{j})$ is an odd function, $U_{i}=U(x_{i})$,
$W_{i}=W(x_{i})$, $V_{ijk}=V_{ij}V_{jk}+V_{jk}V_{ki}+V_{ki}V_{ij}$, and 
\begin{align}
V(x) & =\zeta\hbar\frac{f^{\prime}(x)}{f(x)},\label{eq:V-f-SM}\\
W(x) & =-\frac{\hbar}{\sqrt{2m}}v'(x),\\
U(x) & =W^{2}(x)-\frac{\hbar}{\sqrt{2m}}W'(x),\\
 & =\frac{\hbar^{2}}{2m}\left[(v_{i}')^{2}+v_{i}''\right].
\end{align}
The wave function associated with Eq.~(\ref{eq:IntegrableH}) is
\begin{align}
\Psi_{0} & =\prod_{i}\exp(v_{i})\prod_{i<j}f_{ij}\label{eq:WF-integrable}\\
 & =\exp\left(-\frac{\sqrt{2m}}{\hbar}\sum_{i}\int^{x_{i}}W(y)dy\right)\prod_{i<j}\exp[\hbar\int^{x_{ij}}dyV(y)].\label{eq:WF-VW}
\end{align}

In some situations, one may prefer to choose the pair function $f(x)$
or the prepotential $V(x)$ such that the three-body potential $V_{ijk}$
reduces to two-body potential. It has been shown by Calogero~\cite{Calogero75}
that in this case $V(x)$ must takes the form 
\begin{equation}
V(x)=\alpha\zeta(x;\,g_{2},g_{3})+\beta x,\label{eq:W-zeta}
\end{equation}
where $\alpha$, $\beta$, $g_{2}$, $g_{3}$ are constants and $\zeta(x;\,g_{2},g_{3})$
is the Weierstrass zeta-function~(see e.g., Chap 13 of \cite{erdelyi2007highertranscendental}).
In this case, for $x+y+z=0$, one can find
\begin{equation}
V(x)V(y)+V(y)V(z)+V(z)V(x)=Q(x)+Q(y)+Q(z),\label{eq:Three-Two}
\end{equation}
where 
\begin{equation}
Q(x)=\frac{1}{2}\left[\alpha\beta-\alpha V^{\prime}(x)-V^{2}(x)\right].\label{eq:M-zeta}
\end{equation}
One check that $V(x)$ in Eq.~(\ref{eq:W-zeta}) is odd in $x$ and thus $Q(x)$
is even in $x$. Therefore ~(\ref{eq:IntegrableH}) becomes~\cite{Sutherland71pra}
\begin{equation}
H=\sum_{i}\frac{p_{i}^{2}}{2m}+\sum_{i}U_{i}+\frac{1}{m}\sum_{i<j}\left[\zeta\hbar V'_{ij}+V_{ij}^{2}-(N-2)Q_{ij}-\zeta\sqrt{2m}V_{ij}(W_{i}-W_{j})\right],\label{eq:H-integrable-2body}
\end{equation}
where $N$ is the number of particles. The general solution Eq.~(\ref{eq:W-zeta})
immediately implies the following property of $V(x)$:
\begin{lem}
\label{lem:lin-gen}If $V(x)$ satisfies~(\ref{eq:Three-Two}), so
does $V(x)-\beta x$ \cite{Calogero75}. That is,
\begin{align}
V(x) & \to V(x)-\beta x,\label{eq:lin-gen-V}\\
Q(x) & \to Q(x)+\beta xV(x)-\frac{1}{2}\beta^{2}x^{2}.\label{eq:lin-gen-Q}
\end{align}
\end{lem}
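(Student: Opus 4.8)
The plan is to verify Lemma~\ref{lem:lin-gen} by direct substitution, treating it as a purely algebraic identity that follows from the structure of the general solution~\eqref{eq:W-zeta}. First I would observe that the claim is really two claims: (a) if $V$ satisfies the functional equation~\eqref{eq:Three-Two} for all $x,y,z$ with $x+y+z=0$, then so does $\widehat{V}(x) \equiv V(x) - \beta x$; and (b) the associated function $Q$ transforms as in~\eqref{eq:lin-gen-Q}. Since Calogero's theorem tells us that any solution of~\eqref{eq:Three-Two} has the form $V(x) = \alpha \zeta(x;g_2,g_3) + \beta x$ with $\zeta$ the Weierstrass zeta-function, part (a) is immediate: subtracting $\beta x$ simply sends the parameter $\beta$ to $0$, which is still an admissible value, so $\widehat{V}(x) = \alpha\zeta(x;g_2,g_3)$ is again of the required form and hence again a solution. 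One should note $\widehat V$ is still odd in $x$.

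For part (b), I would compute $Q$ for the shifted prepotential using the defining relation~\eqref{eq:M-zeta}, namely $Q(x) = \tfrac{1}{2}[\alpha\beta - \alpha V'(x) - V^2(x)]$, but now with the understanding that for $\widehat V = \alpha\zeta$ the linear coefficient is $\widehat\beta = 0$, so $\widehat Q(x) = -\tfrac12[\alpha \widehat V'(x) + \widehat V^2(x)] = -\tfrac12[\alpha V'(x) + (V(x)-\beta x)^2]$, where I used $\widehat V' = V' - \beta$... wait, more carefully: $\widehat V'(x) = \alpha \zeta'(x) = V'(x) - \beta$, so $\alpha\widehat V'(x) = \alpha V'(x) - \alpha\beta$. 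Expanding, $\widehat Q(x) = -\tfrac12[\alpha V'(x) - \alpha\beta + V^2(x) - 2\beta x V(x) + \beta^2 x^2] = \tfrac12[\alpha\beta - \alpha V'(x) - V^2(x)] + \beta x V(x) - \tfrac12 \beta^2 x^2 = Q(x) + \beta x V(x) - \tfrac12\beta^2 x^2$, which is exactly~\eqref{eq:lin-gen-Q}. As a consistency check one can verify directly that this shifted pair $(\widehat V, \widehat Q)$ satisfies~\eqref{eq:Three-Two}: the cross terms generated by expanding $\sum \widehat V(x)\widehat V(y)$ collect into $-\beta\sum x(V(y)+V(z)) + \beta^2 \sum xy$, and using $x+y+z=0$ one has $V(y)+V(z) = V(y)+V(-x-y)$ and $\sum_{\mathrm{cyc}} xy = -\tfrac12\sum x^2$, so the extra pieces reorganize precisely into $\sum_{\mathrm{cyc}}[\beta x V(x) - \tfrac12\beta^2 x^2]$, matching the transformation of $Q$.

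The only mild subtlety — and the part I would be most careful about — is bookkeeping with the symmetric sum identities under the constraint $x+y+z=0$: one must repeatedly use $z = -(x+y)$, the oddness of $V$ and $\widehat V$, and elementary symmetric-function relations such as $x^2+y^2+z^2 = -2(xy+yz+zx)$ when $x+y+z=0$. These are routine but easy to get a sign wrong in, so I would organize the computation around the clean statement ``shifting $\beta \to 0$ in the general solution'' rather than brute-forcing the functional equation, invoking Calogero's classification~\cite{Calogero75} to reduce everything to a one-line observation, and then present the $Q$-transformation~\eqref{eq:lin-gen-Q} as the result of substituting into~\eqref{eq:M-zeta}. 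No genuinely hard step is expected here; the lemma is a structural corollary of~\eqref{eq:W-zeta} together with the definition~\eqref{eq:M-zeta}.
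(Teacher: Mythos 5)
Your proof is correct and takes essentially the same route as the paper, which presents the lemma as an immediate consequence of Calogero's general solution~(\ref{eq:W-zeta}) together with the definition of $Q$ in~(\ref{eq:M-zeta}); your substitution $\widehat{V}=\alpha\zeta$, $\widehat{V}'=V'-\beta$ reproduces~(\ref{eq:lin-gen-Q}) exactly. The direct verification of~(\ref{eq:Three-Two}) you append as a consistency check is a nice bonus: it proves the lemma for an arbitrary linear shift using only $x+y+z=0$, without invoking the classification at all (and hence also covers non-smooth prepotentials such as $V(x)\propto\operatorname{sgn}(x)$ to which the lemma is later applied).
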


Now we are in a position to give the guidelines for constructing an
integrable Hamiltonian:
\begin{enumerate}
\item As we have mentioned in the main text, the Hamiltonian~(\ref{eq:IntegrableH})
is always integrable regardless of the normalizability of $f(x)$.
When $f(x)$ is not normalizable, the Jastrow wave function~(\ref{eq:WF-integrable})
is no longer the ground state wave function of Eq. (\ref{eq:IntegrableH}).
However, Eq. (\ref{eq:IntegrableH}) is always legitimate.
\item The only requirement on $f(x)$ is that it has well-defined parity; it should be an even or odd function. 
Both parities lead to an odd prepotential $V(x)$. One can
either choose $f(x)$ or $V(x)$ as the starting point for constructing
the Hamiltonian.
\item When restricting particles in a ring with circumference $L$, all the potentials
must be periodic in $L$, which dictates that $f(x)$ or $V(x)$ must
also have period $L$. $W(x)$ and $v(x)$ are also required to have
the periodicity $L$. 
\end{enumerate}
Following the above guidelines, one should be able to construct an unlimited number of integrable models, as we now illustrate with examples. In
what follows, we shall use the notation $x_{ij}=x_{i}-x_{j}$ and
$\bar{x}_{ij}=(x_{i}+x_{j})/2$ so that the many-body Hamiltonian is written
in a compact way.

\subsection{Recovering the integrability of well-known models}

Now we show that the PHJ-EOF family includes canonical
examples of one-dimensional integrable systems. Thus their integrability
is proven immediately with Theorem~\ref{thm:quantum-integrability}
in the main text.

\subsubsection{Calogero model with and without a trap}

We shall employ Eq.~(\ref{eq:IntegrableH-f}) to generate the model
and consider

\begin{align}
v(x) & =-\frac{1}{2}\frac{m\omega}{\hbar}x^{2},\\
f(x) & =|x|^{\lambda},\lambda\in[0,\infty).\label{eq:f-calogero}
\end{align}
As we have mentioned in the guidelines, $f(x)$ is unnormalizable
since the configuration space is now unbounded. But the integrability
of the parent Hamiltonian is not affected. Here, we also would like
to emphasize that since the absolute value $|\cdot|$ is involved,
care needs to be taken when taking derivatives. It is straightforward
to calculate  
\begin{equation}
\frac{d}{dx}\big|F(x)\big|=\frac{d}{dx}\left\{ F(x)\text{sgn}[F(x)]\right\} =F^{\prime}(x)\text{sgn}[F(x)]+2F^{\prime}(x)F(x)\delta[F(x)].
\end{equation}
 We note the following identities
\begin{align}
F(x)\delta[F(x)] & =0,\,x\neq\xi_{k},\\
\int F(x)F^{\prime}(x)\delta[F(x)]dx & =\sum_{k}\int z\delta(z)dz=0,
\end{align}
where $\xi_{k}$ is the zeros of $F(x)$ and $z=F(x)$. Thus, one can
think of $F(x)\delta[F(x)]=0$ at all points, as long as it
is not multiplied by a factor that is singular at the zeros of $F(x)$.
Therefore, we obtain 
\begin{equation}
\frac{d}{dx}\big|F(x)\big|=F^{\prime}(x)\text{sgn}[F(x)],\label{eq:D-AbsF}
\end{equation}
and using Eq.~(\ref{eq:D-AbsF}), we find 
\begin{equation}
f^{\prime}(x)=\lambda|x|^{\lambda-1}\text{sgn}(x).\label{eq:f1st-Cal}
\end{equation}
As for the second derivative of the pair function, its explicit evaluation yields
\begin{equation}
f^{\prime\prime}(x)=\lambda|x|^{\lambda}\left[\frac{(\lambda-1)}{|x|^{2}}+\frac{2\delta(x)}{|x|}\right].\label{eq:f2nd-Cal}
\end{equation}
According to Eq.~(\ref{eq:V-f-SM}), we find
\begin{equation}
V(x)=\zeta\hbar\frac{\lambda}{|x|}\text{sgn}(x)=\zeta\hbar\frac{\lambda}{x}.
\end{equation}
Furthermore, we notice that when $\lambda\neq1$
\begin{equation}
\frac{(\lambda-1)}{|x|^{2}}+\frac{2\delta(x)}{|x|}=\frac{1}{|x|^{2}}[(\lambda-1)+2|x|\delta(x)]=\frac{\lambda-1}{|x|^{2}},
\end{equation}
where we have used Eq.~(\ref{eq:D-AbsF}) to conclude that $|x|\delta(x)$
is negligible, in comparison  with the  $\lambda-1$ term, whenever the latter is not
zero. To summarize, 
\begin{equation}
f^{\prime\prime}(x)=\begin{cases}
\frac{\lambda(\lambda-1)}{|x|^{2}} & \lambda\in[0,\,1)\cup(1,\infty),\\
\frac{2\delta(x)}{|x|} & \lambda=1.
\end{cases}
\end{equation}
One can further check that the three-body potential vanishes and that the
long-range potential induced by the external harmonic potential is
also a constant. Thus we find that the parent Hamiltonian~(\ref{eq:IntegrableH-f})
reads
\begin{equation}
H=\sum_{i}\left(\frac{p_{i}^{2}}{2m}+\frac{1}{2}m\omega_{0}^{2}x_{i}^{2}\right)-E_0+\begin{cases}
\frac{\hbar^{2}}{m}\frac{\lambda(\lambda-1)}{|x_{ij}|^{2}} & \lambda\in[0,\,1)\cup(1,\infty),\\
\frac{\hbar^{2}}{m}\frac{2\delta(x_{ij})}{|x_{ij}|} & \lambda=1,
\end{cases}\label{eq:PHJ-Cal}
\end{equation}
where
\begin{equation}
E_{0}=\frac{N\hbar\omega}{2}+\frac{N(N-1)\lambda\hbar\omega}{2}.
\end{equation}

The first line of Eq.~(\ref{eq:PHJ-Cal}) in the literature is usually
referred to as the (rational) Calogero model~\cite{Calogero71}. To the best of our knowledge, the careful treatment of the Calogero model  in the case $\lambda=1$ was first discussed by Mathieu Beau in 2017 \cite{Beau17note}. 

\subsubsection{Tonks-Girardeau gas, describing 1D hard-core bosons in a trap}
The Tonks-Girardeau regime describes one-dimensional hardcore bosons \cite{Girardeau60,GWT01}.
Consider the value $\lambda=1$ in Eq.~(\ref{eq:f-calogero}). The interaction
between the particles according to Eq.~(\ref{eq:PHJ-Cal}) is set by $\delta(x_{ij})/|x_{ij}|$. 
This interaction is equivalent to a repulsive delta interaction with infinite
strength, which describes  the hard-core potential  in the Tonks-Girardeau gas. In this case, the
ground state wave function becomes according to Eq.~(\ref{eq:WF-integrable})
\begin{equation}
\Psi_{0}=\exp\left(-\frac{m\omega_{0}}{2\hbar}\sum_{i}x_{i}^{2}\right)\prod_{i<j}|x_{ij}|,
\end{equation}
in agreement with \cite{GWT01}. The corresponding ground state energy is $E_{0}=\frac{N\hbar\omega}{2}+\frac{N(N-1)\hbar\omega}{2}$.

\subsubsection{The Sutherland model: inverse-square interacting particles on a ring}

We shall employ Eq.~(\ref{eq:IntegrableH-f}) to generate the model
and start with 
\begin{align}
f(x) & =\bigg|\sin\left(\frac{\pi x}{L}\right)\bigg|^{\lambda},\label{eq:f-Sutherland}\\
v(x) & =0.
\end{align}
According to Eq.~(\ref{eq:D-AbsF}), we find 
\begin{equation}
f^{\prime}(x)=\frac{\lambda\pi}{L}\bigg|\sin\left(\frac{\pi x}{L}\right)\bigg|^{\lambda-1}\cos\left(\frac{\pi x}{L}\right)\text{sgn}\left[\sin\left(\frac{\pi x}{L}\right)\right],
\end{equation}
and 
\begin{equation}
V(x)=\zeta\hbar\frac{f^{\prime}(x)}{f(x)}=\zeta\hbar\frac{\lambda\pi}{L}\cot\left(\frac{\pi x}{L}\right).\label{eq:V-Sutherland}
\end{equation}
Eq.~(\ref{eq:IntegrableH-f}) becomes the Sutherland model for $\lambda\in[0,\,1)$,
\begin{equation}
H=\sum_{i}\frac{p_{i}^{2}}{2m}+\frac{\hbar^{2}}{m}\left(\frac{\pi}{L}\right)^{2}\left[\sum_{i<j}\frac{\lambda(\lambda-1)}{\sin^{2}\left(\frac{\pi x_{ij}}{L}\right)}\right]-E_0,
\end{equation}
with the limiting case $\lambda\to1$ reduces to the Tonks-Girardeau
gas on a ring, which is similar with Eq.~(\ref{eq:PHJ-Cal}), where  
\begin{equation}
E_0=\left(\frac{\pi}{L}\right)^{2}\frac{\lambda^{2}\hbar^{2}N(N^{2}-1)}{6m}.
\end{equation}

\subsubsection{Lieb-Liniger gas }

The Lieb-Liniger model reads \cite{LL63,L63}
\begin{equation}
H=\sum_{i}\frac{p_{i}^{2}}{2m}+\frac{\hbar^{2}}{m}\left[\sum_{i<j}2g\delta(x_{ij})\right]-E_0,\label{eq:H-LL}
\end{equation}
where 
\begin{equation}
E_0=-\frac{\hbar^{2}g^{2}N(N^{2}-1)}{6m}.
\end{equation}
We shall employ Eq.~(\ref{eq:IntegrableH-f}) to generate Eq.~(\ref{eq:H-LL})
and consider \cite{delcampo20},
\begin{align}
f(x) & =\exp(g|x|),\\
v(x) & =0,
\end{align}
 which leads to 
\begin{equation}
V(x)=g\text{sgn}(x).\label{eq:V-LL}
\end{equation}
Since $\text{sgn}(x)\text{sgn}(y)+\text{sgn}(y)\text{sgn}(z)+\text{sgn}(z)\text{sgn}(x)=-1$
for $x+y+z=0$, the three-body potential is a constant. The two-body
potential term is the delta potential. 

\subsection{Predict the integrability of existing quasi-solve models or new models}

Embedding the Lieb-Liniger gas in a harmonic trap will lead to the
long-range Lieb-Liniger model~\cite{delcampo20,BeauPittman20}.
However, only its ground state property is discussed previously and
its integrability is not clear in the previous literature. Theorems
proved in the main text show that the long-range Lieb-Liniger model
is not only quasi-solvable but also integrable. Below we give more
applications of these theorems and show that a variety of related
models are actually integrable. 

\subsubsection{Quadratic Long-Range LL model}

According to Lemma~\ref{lem:lin-gen}, one can shift Eq.~(\ref{eq:V-LL})
by a linear function and take
\begin{equation}
V(x)=\zeta\hbar[g\text{sgn}(x)-\beta x].\label{eq:V-QLL}
\end{equation}
Setting $W(x)=0$, we find 
\begin{equation}
Q(x)=\hbar^{2}\left[-\frac{g^{2}N(N-1)}{6}+g\beta|x|-\frac{1}{2}\beta^{2}x^{2}\right],
\end{equation}
and Eq.~(\ref{eq:IntegrableH}) becomes
\begin{equation}
H=\sum_{i}\frac{p_{i}^{2}}{2m}+\frac{\hbar^{2}}{m}\sum_{i<j}\left[2g\delta(x_{ij})-Ng\beta|x_{ij}|+\frac{N\beta^{2}}{2}x_{ij}^{2}\right]-E_0,
\end{equation}
where
\begin{equation}
E_{0}=\frac{\hbar^{2}}{m}\left[\frac{N(N-1)\beta}{2}-\frac{g^{2}N(N^{2}-1)}{6}\right].
\end{equation}

This Hamiltonian was first constructed by Calogero~\cite{Calogero75}
with the Jastrow wave function. Its integrability was not clear previously.
Theorem~\ref{thm:quantum-integrability} in the main text shows that
this Hamiltonian is also integrable.

\subsubsection{Long-Range Hybrid LL model}
Considering the prepotential~(\ref{eq:V-QLL}) together with 
\begin{equation}
W(x)=\sqrt{\frac{m}{2}}\omega x,
\end{equation}
one finds Eq.~(\ref{eq:H-integrable-2body}) becomes
\begin{align}
H&=\sum_{i}\frac{p_{i}^{2}}{2m}+\frac{1}{2}m\omega^{2}\sum_{i}x_{i}^{2}-E_0\\&+\frac{\hbar^{2}}{m}\sum_{i<j}\left[2g\delta(x_{ij})-g\left(N\beta+\frac{m\omega}{\hbar}\right)|x_{ij}|+\beta\left(\frac{N\beta}{2}+\frac{m\omega}{\hbar}\right)x_{ij}^{2}\right],\label{eq:LR-Hybrid-LL}
\end{align}
where
\begin{equation}
E_{0}=\frac{N\hbar\omega}{2}-\frac{\hbar^{2}g^{2}N(N^{2}-1)}{6m}.
\end{equation}
Eq.~(\ref{eq:LR-Hybrid-LL}) is a further generalization of the Long-Range
LL model discussed in the main text and Refs.~\cite{delcampo20,BeauPittman20},
which we discuss for the first time here. This model is also integrable, according
to Theorem~\ref{thm:quantum-integrability} in the main text. Interestingly,
upon taking 
\begin{equation}
\beta=-\frac{m\omega}{N\hbar},
\end{equation}
the long-range Coulomb interaction vanishes and Eq.~(\ref{eq:LR-Hybrid-LL})
simplifies to 
\begin{equation}
H=\sum_{i}\frac{p_{i}^{2}}{2m}+\frac{1}{2}m\omega^{2}\sum_{i}x_{i}^{2}+\frac{\hbar^{2}}{m}\sum_{i<j}2g\delta(x_{ij})-\frac{m\omega^{2}}{2N}\sum_{i<j}x_{ij}^{2}-E_0.
\end{equation}
The above Hamiltonian describes the Lieb-Liniger gas in a harmonic
trap with quadratic interaction between the particles. Again, it is
integrable according to Theorem~\ref{thm:quantum-integrability}
in the main text. 

The same holds true if one takes 
\begin{equation}
W(x)=-\sqrt{\frac{m}{2}}\omega x,
\end{equation}
which leads to an unnormalizable wave function according to Eq.~(\ref{eq:WF-VW}).
However, as we have mentioned the integrability of the Hamiltonian
is not affected by the normalization of the wave function. Thus, the
sign of $\omega$ does not matter. 

\subsubsection{Generalized Hyperbolic model}

Since we know $V(x)=\zeta\hbar\lambda a\coth(ax)$ generates the hyperbolic
model~\cite{Polychronakos92}. According to Lemma~\ref{lem:lin-gen},
we take
\begin{equation}
V(x)=\zeta\hbar[\lambda a\coth(ax)-bx],
\end{equation}
and find 
\begin{equation}
Q(x)=\hbar^{2}\left[\lambda abx\coth(ax)-\frac{1}{2}b^{2}x^{2}+\text{constant}\right].
\end{equation}
Following the guidelines, we find in the absence of external potential
(i.e., $W_{i}=0$), Eq.~(\ref{eq:H-integrable-2body}) becomes 
\begin{align}
H & =\sum_{i}\frac{p_{i}^{2}}{2m}+\frac{\hbar^{2}}{m}\sum_{i<j}\left[\frac{a^{2}\lambda(\lambda-1)}{\sinh^{2}(ax_{ij})}-\lambda Nabx_{ij}\coth(ax_{ij})+\frac{1}{2}Nb^{2}x_{ij}^{2}\right]-E_0,
\end{align}
where 

\begin{equation}
E_{0}=\frac{\hbar^{2}}{m}\left[\frac{bN(N-1)}{2}-\frac{\lambda^{2}a^2 N(N^{2}-1)}{6}\right].
\end{equation}

This model was discussed by Calogero~\cite{Calogero75} and was
only known as quasi-solvable previously, i.e., only the ground state
wave function is known. Here, Theorem~\ref{thm:quantum-integrability}
in the main text indicates that this Hamiltonian is also integrable. 

\subsubsection{Sutherland model in a trigonometric trap}

To the best of our knowledge, the possibility of embedding a periodic potential for the Sutherland
model has not been yet discussed in the literature. We take this step
with the PHJ-EOF approach. We shall employ Eq.~(\ref{eq:IntegrableH-f})
and take $f(x)$ to be Eq.~(\ref{eq:f-Sutherland}) and
\begin{equation}
v(x)=-\frac{m\omega}{2\hbar}\left(\frac{L}{\pi}\right)^{2}\sin^{2}\left(\frac{\pi x}{L}\right),
\end{equation}
which yields 
\begin{equation}
U(x)=-\frac{\hbar\omega}{2}\cos\left(\frac{2\pi x}{L}\right)-\frac{m\omega^{2}L^{2}}{16\pi^{2}}\cos\left(\frac{4\pi x}{L}\right)+\frac{m\omega^{2}L^{2}}{16\pi^{2}}.
\end{equation}
We have computed before $f^{\prime}(x)/f(x)$ in Eq.~(\ref{eq:V-Sutherland}).
According to Eq.~(\ref{eq:IntegrableH-f}), we find

\begin{align}
H & =\sum_{i}\frac{p_{i}^{2}}{2m}-\sum_{i}\left[\frac{\hbar\omega}{2}\cos\left(\frac{2\pi x_{i}}{L}\right)+\frac{m\omega^{2}L^{2}}{16\pi^{2}}\cos\left(\frac{4\pi x_{i}}{L}\right)\right]\nonumber \\
 & +\frac{\hbar^{2}}{m}\sum_{i<j}\left[\left(\frac{\pi}{L}\right)^{2}\frac{\lambda(\lambda-1)}{\sin^{2}\left(\frac{\pi x_{ij}}{L}\right)}-\frac{\lambda m\omega}{\hbar}\cos\left(\frac{\pi x_{ij}}{L}\right)\cos\left(\frac{2\pi\bar{x}_{ij}}{L}\right)\right]-E_0,\label{eq:Sutherland-Embedding}
\end{align}
where 
\begin{equation}
E_{0}=\left(\frac{\pi}{L}\right)^{2}\frac{\lambda^{2}\hbar^{2}N(N^2-1)}{6m}-\frac{Nm\omega^{2}L^{2}}{16\pi^{2}}.
\end{equation}
In the thermodynamic limit $L,\,N\to\infty$ with $N/L$ kept fixed,
so that one can readily check Eq.~(\ref{eq:Sutherland-Embedding})
reduces to the Calogero model in a harmonic trap. Thus Eq.~(\ref{eq:Sutherland-Embedding})
can be viewed as the generalization of the embedded Calogero model
to the embedded the Sutherland model. Again, if one flips the sign
of the frequency $\omega$, the integrability is preserved.

\subsection{Fixing interactions first: Toda-like interactions in the continuum}

The above examples have been found in the PHJ-EOF family by choosing the pair function $f(x)$.
As an alternative, one can twist the construction around, by fixing the interactions first.
For the sake of illustration, let us consider a two-body potential that decays exponentially with the interparticle distance over a length scale $\ell$, e.g., 
\beqa
V_2=\frac{\hbar^2g}{m}\sum_{i< j}e^{-|x_{ij}|/\ell}.
\eeqa
This potential is a generalization to the continuum of the Toda interactions and can also be considered as a low-density approximation to the hyperbolic potential, e.g.,  as discussed in \cite{Sutherland04,delcampo20}.
Thus, we expect 
\begin{equation}
\frac{f''(x)}{f(x)}=ge^{-|x|/\ell}+\text{singular terms}. 
\end{equation} 
Ignoring the singular terms for the moment, this differential equation for $x>0$ admits as a specific solution 
\begin{equation}
f(x)=I_{0}\left(2\sqrt{g}le^{-x/\ell}\right),
\end{equation} 
 where $I_{\alpha}(x)$ is the modified Bessel functions of first kind and order $\alpha$. This solution motivates the choice
 \begin{equation}
f(x)=I_{0}\left(2\sqrt{g}le^{-|x|/\ell}\right)
\end{equation}
for all $x$.
 This pair function decays smoothly as function of $x$ to unit value. To compute $f''(x)/f(x)$, care must be taken for the absolute value. Using Eq.~(\ref{eq:D-AbsF}), we can readily calculate  
\begin{align}
\frac{dI_{0}\left(2\sqrt{g}le^{-|x|/\ell}\right)}{dx} & =\frac{dI_{0}\left(2\sqrt{g}le^{-|x|/\ell}\right)}{d|x|}\text{sgn}(x)\nonumber \\
 & =-2e^{-|x|/\ell}\sqrt{g}I_{1}\left(2\sqrt{g}le^{-|x|/l}\right)\text{sgn}(x),\\
\frac{d^{2}I_{0}\left(2\sqrt{g}le^{-|x|/\ell}\right)}{dx^{2}} & =\frac{d^{2}I_{0}\left(2\sqrt{g}le^{-|x|/\ell}\right)}{d^{2}|x|}+\frac{dI_{0}\left(2\sqrt{g}le^{-|x|/\ell}\right)}{d|x|}\delta(x)\nonumber \\
 & =\frac{d^{2}I_{0}\left(2\sqrt{g}le^{-|x|/\ell}\right)}{d^{2}|x|}-2e^{-|x|/l}\sqrt{g}I_{1}\left(2\sqrt{g}le^{-|x|/\ell}\right)\delta(x).
\end{align}
Thus we find 
\begin{align}
\frac{f'(x)}{f(x)} & =-2\sqrt{g}e^{-|x|/\ell}\frac{I_{1}\left(2\sqrt{g}le^{-|x|/\ell}\right)}{I_{0}\left(2\sqrt{g}le^{-|x|/\ell}\right)}\text{sgn}(x),\label{eq:fprime-f}\\
\frac{f''(x)}{f(x)} & =ge^{-|x|/\ell}-c\delta(x)\label{eq:f2prime-f},\\
\end{align}
where 
\begin{equation}
c=\frac{2\sqrt{g}I_{1}\left(2\sqrt{g}\ell\right)}{I_{0}\left(2\sqrt{g}\ell\right)}.
\end{equation}
 
The corresponding three-body potential is nonzero, as expected, and takes the form given in Eq.~(\ref{eq:IntegrableH-f}) with $f'(x)/f(x)$ given by Eq.~\eqref{eq:fprime-f}.
The expression of $f'(x)/f(x)$ sets the generalized momenta $\pi_i$ through Eq.~\eqref{eq:Gpi-f} in the main text and the corresponding integrals of motion $I_n$. 
The resulting Hamiltonian thus takes the form
\beqa
H=\sum_{i}\frac{p_{i}^{2}}{2m}+\frac{\hbar^{2}}{m}\sum_{i< j}\left[ge^{-|x_{ij}|/\ell}-c\delta(x_{ij})\right]+V_3.
\eeqa

Given $I(0)=1$, i.e., the wave function does not decay when particles are far apart, the Jastrow wave function is not normalizable in the in absence of an external trap. We thus consider the case where system is trapped. The ground state of the trapped system is then 
\beqa 
\Psi_0=e^{-\frac{m\om}{2\hbar}\sum_{i=1}^Nx_i^2}\prod_{i<j}I_0[2\ell\sqrt{g}\exp(-|x_{ij}|/\ell)].
\eeqa
The long-range two-body potential due to the embedding of the external harmonic trap in Eq.~(\ref{eq:IntegrableH-f}) is similarly given in terms of the modified Bessel functions.\\

}
 
\end{document}